    \titleformat{\title}{\large\bfseries}{}{}{}
    \titleformat{\section}{\normalfont\bfseries}{\thesection}{0.5em}{}
    \titleformat{\subsection}{\normalfont\it}{\thesubsection}{0.5em}{}
    \titleformat{\subsubsection}{\normalfont\normalsize\it}{\thesubsubsection}{0.5em}{}
    \titleformat{\paragraph}[runin]{\normalfont\bfseries}{\theparagraph}{0.5em}{}
    \titleformat{\subparagraph}[runin]{\normalfont\normalsize\it}{\thesubparagraph}{0.5em}{}
	\newcommand{\1}{\mathbbm{1}}
    \providecommand{\1}{\mathds{1}}
    \definecolor{myblue}{HTML}{000099}
    \definecolor{mylightblue}{HTML}{CCCCFF}
    \definecolor{myorange}{HTML}{FF6600}
    \definecolor{myred}{HTML}{990000}
    \definecolor{mygreen}{HTML}{009900}
    \definecolor{mylightgray}{HTML}{E0E0E0}
    \tikzstyle{vertex}=[circle, draw, fill=white, inner sep=0pt, minimum width=1ex]
    \tikzset{every picture/.append style={baseline,scale=1.1}}
    \tikzstyle{cut-edge}=[dotted]
    \tikzstyle{base}=[line width=1.2pt]
    \tikzstyle{lifted}=[blue]
\providecommand{\R}{\mathbb{R}}
\providecommand{\Z}{\mathbb{Z}}
\providecommand{\Q}{\mathbb{Q}}
\theoremstyle{plain}
\newtheorem{thm}{Theorem}
\newtheorem{lem}{Lemma}
\newtheorem{cor}{Corollary}
\theoremstyle{definition}
\newtheorem{defn}{Definition}
\newtheorem{fact}{Fact}
\newtheorem{rem}{Remark}
\DeclareMathOperator{\dist}{dist}
\DeclareMathOperator{\conv}{conv}
\DeclareMathOperator{\lin}{lin}
\newcommand{\np}{\textsc{np}}
\newcommand{\mybinom}[2]{\raisebox{0.1ex}{\scalebox{0.8}{$\tbinom{#1}{#2}$}}}
\newcommand{\treeproblem}{tree partition problem}
\newcommand{\pathproblem}{path partition problem}
\def\moverlay{\mathpalette\mov@rlay}
\def\mov@rlay#1#2{\leavevmode\vtop{%
   \baselineskip\z@skip \lineskiplimit-\maxdimen
   \ialign{\hfil$\m@th#1##$\hfil\cr#2\crcr}}}
\newcommand{\charfusion}[3][\mathord]{
    #1{\ifx#1\mathop\vphantom{#2}\fi
        \mathpalette\mov@rlay{#2\cr#3}
      }
    \ifx#1\mathop\expandafter\displaylimits\fi}
\newcommand{\cupdot}{\charfusion[\mathbin]{\cup}{\cdot}}
\begin{document}

\title{\bf\Large Decomposition of Trees and Paths via Correlation}
\author{Jan-Hendrik Lange \\ \small{\texttt{jlange@mpi-inf.mpg.de}} \and Bjoern Andres \\ \small{\texttt{andres@mpi-inf.mpg.de}}}
\date{Max Planck Institute for Informatics \\ Saarland Informatics Campus $\cdot$ Saarbr\"ucken, Germany\\}

\twocolumn[
\begin{@twocolumnfalse}
\maketitle
\begin{abstract}
We study the problem of decomposing (clustering) a tree with respect to costs attributed to pairs of nodes, so as to minimize the sum of costs for those pairs of nodes that are in the same component (cluster).
For the general case and for the special case of the tree being a star, we show that the problem is \np-hard.
For the special case of the tree being a path, this problem is known to be polynomial time solvable. We characterize several classes of facets of the combinatorial polytope associated with a formulation of this clustering problem in terms of lifted multicuts. In particular, our results yield a complete totally dual integral (TDI) description of the lifted multicut polytope for paths, which establishes a connection to the combinatorial properties of alternative formulations such as set partitioning.
\end{abstract}
\vspace{7ex}
\end{@twocolumnfalse}
]



\section{Introduction}

We study the problem of decomposing (clustering) a tree with respect to costs attributed to pairs of nodes, so as to minimize the sum of costs for those pairs of nodes that are in the same component (cluster).
This \emph{\treeproblem{}} is stated rigorously in Def.~\ref{def:problem-main}.
One instance and its solution are depicted in Fig.~\ref{fig:tree-decomposition-costs}.

On the one hand, the \treeproblem{} lacks a complexity of clustering problems for general graphs:
Its feasible set, the set of all decompositions of a tree, is trivial to characterize.
For example, the decompositions of any tree $T = (V, E)$ relate one-to-one to the binary edge labelings $y \in \{0,1\}^E$ that indicate whether neighboring nodes $\{u, v\} \in E$ are in the same component ($y_{uv} = 1$) or distinct components ($y_{uv} = 0$).

On the other hand, the \treeproblem{} exhibits a complexity not commonly found in clustering problems for general graphs:
Its objective function has a multilinear form in the coordinates of $y$ (in the standard unit basis) whose polynomial degree is the length of the longest path in $T$.
This form arises from the fact that any two distinct nodes $\{u, v\} \in \tbinom{V}{2}$ are in the same component iff $y_e = 1$ for all edges $e$ on the unique path $P_{uv}$ in $T$ from $u$ to $v$, that is, iff $\prod_{e \in P_{uv}} y_e = 1$. 
Hence, the \treeproblem{} can be stated rigorously as follows:

\begin{defn}
\label{def:problem-main}
For any (finite, simple, undirected) tree $T = (V, E)$ and 
any $c \colon \tbinom{V}{2} \to \mathbb{R}$,
the optimization problem \eqref{eq:problem-pbo} is called the instance of the \emph{\treeproblem{}} w.r.t.~$T$ and $c$.
If $T$ is a path, it is also called the instance of the \emph{\pathproblem{}} w.r.t.~$T$ and $c$.
\begin{align}
\min_{y \in \{0,1\}^E}
\sum_{\{u,v\} \in \mybinom{V}{2}}
    c_{uv} 
\prod_{e \in P_{uv}}
    y_e
\label{eq:problem-pbo}
\end{align}
\end{defn}

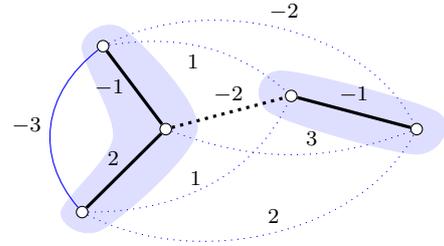
\begin{figure}
\centering
\begin{tikzpicture}[shift={(0,-0.0)},xscale=2.5,yscale=2.0]
\draw[draw=mylightblue!65, fill=mylightblue!65] plot[smooth cycle, tension=0.5] coordinates
    {(-0.1, 0) (0.15, 0.45) (0, 1) (0.2, 1.1) (0.55, 0.5) (0.1, -0.1)};
\draw[draw=mylightblue!65, fill=mylightblue!65] plot[smooth cycle, tension=0.5] coordinates
    {(0.9, 0.6) (0.95, 0.85) (1.7, 0.6) (1.6, 0.35)};

\node (0) [style=vertex,fill=white,label=below:{\scriptsize}] at (0, 0) {};
\node (1) [style=vertex,fill=white,label=below:{\scriptsize}] at (0.1, 1.0) {};
\node (2) [style=vertex,fill=white,label=below:{\scriptsize}] at (0.4, 0.5) {};
\node (3) [style=vertex,fill=white,label=above:{\scriptsize}] at (1.0, 0.7) {};
\node (4) [style=vertex,fill=white,label=below:{\scriptsize}] at (1.6, 0.5) {};

\draw (0) edge[base] node[left=4,above=-2] {\footnotesize $2$} (2);
\draw (1) edge[base] node[left] {\footnotesize $-1$} (2);
\draw (2) edge[cut-edge,base] node[above] {\footnotesize $-2$} (3);
\draw (3) edge[base] node[above] {\footnotesize $-1$} (4);

\draw[lifted] (0) edge[bend left=40] node[left,black] {\footnotesize $-3$} (1);
\draw[cut-edge,lifted] (1) edge[bend left=30] node[left=5, below=-1,black] {\footnotesize $1$} (3);
\draw[cut-edge,lifted] (0) edge[bend right=35] node[left=6,above=-4,black] {\footnotesize $1$} (3);
\draw[cut-edge,lifted] (0) edge[bend right=50] node[above,black] {\footnotesize $2$} (4);
\draw[cut-edge,lifted] (1) edge[bend left=50] node[above,black] {\footnotesize $-2$} (4);
\draw[cut-edge,lifted] (2) edge[bend right=25] node[above=6,right=2,black] {\footnotesize $3$} (4);
\end{tikzpicture}
\caption{Depicted above is an instance of the \treeproblem{} with costs associated to pairs of nodes. 
Edges of the tree are depicted as straight bold lines (dotted or solid).
Additional pairs of nodes are marked by thin blue curves (dotted or solid).
The solution decomposes the tree into two components, depicted with a shaded background, leaving some pairs of nodes in the same component (solid lines and curves) and some pairs of lines in distinct components (dotted lines and curves).}
\label{fig:tree-decomposition-costs}
%
%
\end{figure}

\vspace{-3ex}

\subsection{Contribution} 

In Section~\ref{sec:complexity}, we survey several representations of the \treeproblem{} and discuss its computational complexity.
For the general case and for the special case of the tree being a star, we show that the problem is \np-hard. The \pathproblem{}, on the other hand, is known to be polynomial time solvable \citep{Kernighan1971}.

In Section~\ref{sec:polyhedral}, we study the lifted multicut polytope \citep{hornakova-2017} associated with the lifted multicut representation of the \treeproblem{}. We characterize several classes of facets for the general case, which in particular yield a complete totally dual integral description (TDI) of the lifted multicut polytope for paths. Our results relate the geometry of lifted multicuts to the combinatorial properties of the tree and \pathproblem{}.

\subsection{Notation}

Throughout the paper, we consider $T = (V,E)$ to be a (finite, simple, undirected) tree with $n = \lvert E \rvert$ edges. Let $m = \lvert \mybinom{V}{2} \rvert$ be the number of distinct pairs of nodes in $V$.
For any $\{u,v\} \in \tbinom{V}{2}$, we denote by $P_{uv}$ the unique path in $T$ connecting $u$ and $v$. 
Moreover, we denote by $\dist(u,v)$ the distance of $u$ and $v$ in $T$, i.e.\ the length of $P_{uv}$. 
To ease notation, we take the liberty of identifying any graph $G = (V_G, E_G)$ with its edge set $E_G$.
Hence, we may write $e \in G$ instead of $e \in E_G$. 
For any $k \in \mathbb{N}$, we denote by $[k] := \{1, \dotsc, k\}$ the set of integers from $1$ through $k$.


\section{Related Work}

\subsection{Pseudo-Boolean Optimization}
The optimization of pseudo-Boolean functions plays an important role in machine learning, for instance, in MAP-inference for computer vision models. The general problem can be reduced to the quadratic case \citep{Boros2002, Boros2012}, which is responsible for the \np-hardness of the problem. The combinatorial polytope associated with the linearization of quadratic pseudo-Boolean functions was extensively studied by \citet{Padberg1989}. Computational approaches to pseudo-Boolean optimization based on the roof duality bound \citep{Hammer1984,Kahl2012} have been quite successful in practice \citep{Rother2007}. 

\subsection{Correlation Clustering and Multicut Polytopes}
Decompositions of graphs into an unknown number of components based on (dis-)similarities between \emph{neighboring} nodes is referred to as weighted correlation clustering. It has been studied for the complete graph \citep{Bansal2004} as well as general graphs \citep{Demaine2006}. As any decomposition of a graph is characterized by the mathematical notion of a multicut, the study of multicuts is closely related to correlation clustering. The combinatorial polytopes associated with the multicuts of a graph have been studied, among others, most notably by \citet{Groetschel1989, Chopra1993, Deza1997}. The more general case where correlations between \emph{non-neighboring} nodes are taken into account as well has been introduced by \citet{hornakova-2017}.

\subsection{Sequential Set Partitioning}
Set partitioning problems where the elements are assumed to adhere to a linear order have been studied by \citet{Kernighan1971}, who devises a dynamic programming algorithm that essentially solves a shortest path problem in a directed acyclic graph. The corresponding integer linear programming formulation admits a totally unimodular constraint matrix \citep{Joseph1997}.


\section{Representations and Complexity} 
\label{sec:complexity}

\subsection{Sparse Pseudo-Boolean Functions}

For any $n \in \mathbb{N}$, any $f \colon \{0,1\}^n \to \R$ is called an $n$-variate \emph{pseudo-Boolean} function (PBF). Any $n$-variate PBF $f$ has a unique multi-linear polynomial form\footnote{In line with the literature on pseudo-Boolean optimization, we call the form \eqref{eq:multi-linear-poly} \emph{multi-linear} despite its constant term $c_\emptyset$.}:
\begin{align}
f(y) = \sum_{I \subseteq [n]} c_I \prod_{i \in I} y_i, 
\qquad \textnormal{with} \qquad
c \colon 2^{[n]} \to \R.
\label{eq:multi-linear-poly}
\end{align}
We observe that the objective functions of the \treeproblem{} form a class of sparse PBFs whose set $\{I \subseteq [n] \colon c_I \neq 0\}$ of non-zero coefficients is constrained by a tree. This class of functions is formalized in the following definition.

\begin{defn}
An $n$-variate PBF is called \emph{tree-sparse} w.r.t.~a tree $T = (V, [n])$ iff its multi-linear polynomial form \eqref{eq:multi-linear-poly} is such that, for every $I \subseteq [n]$ with $c_I \neq 0$, the set $I$ induces a path in $T$.
The function is called \emph{path-sparse} w.r.t.~a path $P = (V, [n])$ iff it is tree-sparse w.r.t.~$P$.
\end{defn}

\subsection{Set Partitioning}

For any connected subgraph (subtree) $S \subseteq T$, introduce a binary variable $\lambda_S \in \{0,1\}$. By defining the cost $d_S = \sum_{u,v \in V_S} c_{uv}$ for each component $S$, problem \eqref{eq:problem-pbo} can be reformulated as the set partitioning problem
\begin{align}
\min \quad & d^\top \lambda \label{eq:problem-set-partition} \\
\text{s.t.} \quad & \sum_{S : u \in V_S} \lambda_S = 1, \quad \forall u \in V \nonumber \\
& \lambda_S \geq 0, \; \lambda_S \in \Z. \nonumber
\end{align}
Here, the costs $d_S$ account for all pairs of nodes within component $S$ and the constraints ensure that every node is contained in exactly one component of the partition. Note that the number of variables $\lambda_S$ is exponential in the number of leaves of $T$.

\begin{lem}
The vector $y \in \{0,1\}^E$ is a solution of problem \eqref{eq:problem-pbo} w.r.t.\ the tree $T=(V,E)$ and costs $c \colon \tbinom{V}{2} \to \R$ iff the vector $\lambda$ defined by
\begin{align*}
\lambda_S = 1 \iff \forall e \in S : y_e = 1
\end{align*}
for every subtree $S \subseteq T$, is a solution of problem \eqref{eq:problem-set-partition} w.r.t.\ the costs $d_S = \sum_{u,v \in V_S} c_{uv}$.
\end{lem}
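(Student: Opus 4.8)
The plan is to construct an explicit, objective-preserving bijection between the integer feasible points of \eqref{eq:problem-pbo} and those of \eqref{eq:problem-set-partition}, from which the statement about optimal solutions follows at once. For the forward direction, given $y \in \{0,1\}^E$ I would consider the subgraph $(V, \{e \in E : y_e = 1\})$ of uncut edges and let $S_1, \dots, S_k$ be its connected components. Since a connected subgraph of a tree is again a subtree and the $S_i$ partition $V$, setting $\lambda_S = 1$ for $S \in \{S_1, \dots, S_k\}$ and $\lambda_S = 0$ otherwise is exactly the vector of the statement: the blocks are precisely the maximal subtrees all of whose edges carry $y_e = 1$. This $\lambda$ is feasible for \eqref{eq:problem-set-partition}, because every node lies in exactly one component, so each constraint $\sum_{S : u \in V_S} \lambda_S = 1$ holds, and $\lambda$ has entries in $\{0,1\} \subseteq \mathbb{Z}_{\geq 0}$.

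Next I would check that this map is a bijection onto the integer feasible set. For the inverse, let $\lambda$ be feasible and integral; nonnegativity, integrality, and the covering equations force $\lambda \in \{0,1\}$ with support a family of subtrees partitioning $V$. I would then define $y_e = 1$ iff the two endpoints of $e$ lie in a common block. The decisive use of the tree hypothesis enters here: because the path $P_{uv}$ between any two nodes is \emph{unique}, cutting the single edge $e = \{a,b\}$ already separates $a$ from $b$, so the connected components of the uncut subgraph coincide exactly with the prescribed blocks. Hence the two constructions are mutually inverse and feasible points correspond one-to-one.

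Finally I would verify equality of objectives. Fixing $y$ with components $S_1, \dots, S_k$, the term $\prod_{e \in P_{uv}} y_e$ equals $1$ iff every edge on the unique path $P_{uv}$ is uncut, i.e.\ iff $u$ and $v$ share a component, and $0$ otherwise. Thus the objective of \eqref{eq:problem-pbo} equals $\sum_{i=1}^k \sum_{u,v \in V_{S_i}} c_{uv} = \sum_{i=1}^k d_{S_i} = d^\top \lambda$, the objective of \eqref{eq:problem-set-partition}. Since the bijection is surjective onto the feasible set and preserves the objective value, $y$ minimizes \eqref{eq:problem-pbo} iff its image $\lambda$ minimizes \eqref{eq:problem-set-partition}, which is the claim.

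I expect the only genuinely delicate point to be the bijection between edge labelings and subtree partitions, specifically the verification that the uncut subgraph recovers \emph{exactly} the chosen blocks with no collisions and no labeling left unrealized. This is precisely where treeness is indispensable: uniqueness of $P_{uv}$ guarantees that a single cut edge disconnects its endpoints, whereas on a general graph the analogous labeling-to-partition map need not even be well defined. The remaining steps, feasibility and the objective computation, are routine once this correspondence is in place.
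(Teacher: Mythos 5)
Your proposal is correct and follows essentially the same route as the paper: identify the components of the uncut subgraph with the blocks of the partition and check that the objective values agree, so that optimality transfers. The paper's proof is just the objective computation (a chain of equalities rewriting $\sum_{\{u,v\}} c_{uv}\prod_{e \in P_{uv}} y_e$ as $\sum_S \lambda_S \sum_{u,v \in V_S} c_{uv}$, with $c_{uu}=0$), leaving the feasibility/bijection part implicit, whereas you spell it out; your reading of the map as selecting the \emph{maximal} subtrees with all edges uncut is the intended one, since otherwise the partition constraints would be violated.
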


\begin{proof}
We set $c_{uu} = 0$ for all $u \in V$. Let $S_u$ denote the component of the partition that contains $u \in V$. The claim follows from
\begin{align*}
\sum_{\{u,v\} \in \mybinom{V}{2}} c_{uv} \prod_{e \in P_{uv}} y_e = & \sum_{u \in V} \sum_{v \in V_{S_u}} c_{uv} \prod_{e \in P_{uv}} y_e \\
= & \sum_{u \in V} \sum_{v \in V_{S_u}} c_{uv} \\
= & \sum_{u \in V} \sum_{\substack{\text{subtree} \\ S \subseteq T : u \in V_S}} \lambda_S \sum_{v \in V_S} c_{uv} \\
= & \sum_{\substack{\text{subtree} \\ S \subseteq T}} \lambda_S \sum_{u,v \in V_S} c_{uv}. \qedhere
\end{align*}
\end{proof}

\subsection{Lifted Multicuts}

In this section, we identify the \treeproblem{} (Def.~\ref{def:problem-main}) as a special case of the minimum cost lifted multicut problem \citep{hornakova-2017} where the underlying graph is a tree. Essentially, this representation is a linearization of the \treeproblem{} with reversed binary encoding.
 
The \emph{minimum cost lifted multicut problem} w.r.t.\ the tree $T=(V,E)$ and costs $c \colon \tbinom{V}{2} \to \R$ is the combinatorial optimization problem
\begin{align}
\min_{x \in X_T} \sum_{\{u,v\} \in \mybinom{V}{2}} c_{uv} \, x_{uv}, \label{eq:problem-lmp}
\end{align}
where $X_T$ is defined as
\begin{align}
X_{T} & = \Big \{ x \in \{0,1\}^m \; \big| \nonumber \\
& x_{uv} \leq \sum_{e \in P_{uv}} x_e && \hspace{-1em} \forall u,v \in V, \dist(u,v) \geq 2, \label{eq:tree-lmp-path} \\
& x_e \leq x_{uv} && \hspace{-1em} \forall u,v \in V, \dist(u,v) \geq 2, \; \forall e \in P_{uv} \label{eq:tree-lmp-cut} \Big \}.
\end{align}
Problem \eqref{eq:problem-lmp} is the special case of the minimum cost lifted multicut problem as presented by \citet{hornakova-2017} for the specific choice $G = T$ and $G' = (V, \mybinom{V}{2})$, i.e.\ the tree $T$ lifted to the complete graph on $V$ (cf.\ appendix \ref{sec:lifted-multicuts}). Lemma \ref{lem:tree-lmp} states that the \treeproblem{} can be reformulated as a minimum cost lifted multicut problem.
\begin{lem} \label{lem:tree-lmp}
The vector $y \in \{0,1\}^n$ is a solution of problem \eqref{eq:problem-pbo} w.r.t.\ the tree $T=(V,E)$ and costs $c \colon \mybinom{V}{2} \to \R$ 
iff
the unique $x \in X_T$ such that $x_e = 1 - y_e$ for all $e \in E$ is a solution of problem \eqref{eq:problem-lmp} w.r.t.\ $T$ and the cost function $-c$.
\end{lem}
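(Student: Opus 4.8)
The plan is to exhibit a bijection between the feasible set $\{0,1\}^E$ of problem~\eqref{eq:problem-pbo} and the feasible set $X_T$ of problem~\eqref{eq:problem-lmp}, under which the two objective functions agree up to an additive constant; minimizers then correspond, which is exactly the claimed equivalence. The central observation is that the constraints defining $X_T$ force every non-edge coordinate of $x$ to be determined by the edge coordinates, so that the edge labeling $y$ and its complement on the edges already fix the whole vector $x$.

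First I would show that for any $y \in \{0,1\}^E$ there is exactly one $x \in X_T$ with $x_e = 1 - y_e$ for all $e \in E$, and that this $x$ satisfies $x_{uv} = 1 - \prod_{e \in P_{uv}} y_e$ for every $\{u,v\} \in \tbinom{V}{2}$. For a pair with $\dist(u,v) = 1$ this is immediate, since $P_{uv}$ is the single edge $\{u,v\}$. For $\dist(u,v) \geq 2$ I would read off the two inequality families: constraint~\eqref{eq:tree-lmp-cut} gives $x_e \leq x_{uv}$ for each $e \in P_{uv}$, hence $\max_{e \in P_{uv}} x_e \leq x_{uv}$, while constraint~\eqref{eq:tree-lmp-path} gives $x_{uv} \leq \sum_{e \in P_{uv}} x_e$. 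For binary $x$ these together pin down $x_{uv} = \max_{e \in P_{uv}} x_e$: if some edge on the path is cut the lower bound forces $x_{uv} = 1$, and if no edge is cut the upper bound forces $x_{uv} = 0$. Substituting $x_e = 1 - y_e \in \{0,1\}$ yields $x_{uv} = 1 - \prod_{e \in P_{uv}}(1 - x_e) = 1 - \prod_{e \in P_{uv}} y_e$, which is uniquely determined; one then verifies this $x$ indeed lies in $X_T$, so $y \mapsto x$ is a well-defined bijection $\{0,1\}^E \to X_T$.

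With this explicit form in hand, the objective of problem~\eqref{eq:problem-lmp} under the cost $-c$ becomes
\begin{align*}
\sum_{\{u,v\} \in \mybinom{V}{2}} (-c_{uv})\, x_{uv}
&= \sum_{\{u,v\} \in \mybinom{V}{2}} (-c_{uv}) \Big( 1 - \prod_{e \in P_{uv}} y_e \Big) \\
&= - \sum_{\{u,v\} \in \mybinom{V}{2}} c_{uv} \;+\; \sum_{\{u,v\} \in \mybinom{V}{2}} c_{uv} \prod_{e \in P_{uv}} y_e .
\end{align*}
The first summand is a constant independent of $y$, and the second is precisely the objective of problem~\eqref{eq:problem-pbo}. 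Hence, under the bijection, minimizing the lifted multicut objective with cost $-c$ is equivalent to minimizing the \treeproblem{} objective with cost $c$, so $y$ solves~\eqref{eq:problem-pbo} iff the associated $x$ solves~\eqref{eq:problem-lmp}.

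I expect the main obstacle to be the first step, namely carefully justifying that the two inequality families collapse to the single identity $x_{uv} = 1 - \prod_{e \in P_{uv}} y_e$ and that no other feasible extension of the prescribed edge values exists. This is exactly where the tree structure enters: $P_{uv}$ is the \emph{unique} $u$--$v$ path and the constraints of $X_T$ are indexed precisely along these paths, so there is no freedom left once the edge coordinates are fixed.
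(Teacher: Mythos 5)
Your proposal is correct and follows essentially the same route as the paper: both introduce the substitution $x_{uv} = 1 - \prod_{e \in P_{uv}} y_e$ and observe that the objective of \eqref{eq:problem-lmp} with cost $-c$ differs from that of \eqref{eq:problem-pbo} only by the constant $\sum_{uv} c_{uv}$. You are in fact somewhat more explicit than the paper, which merely asserts that $X_T$ ``captures the relationship'' between edge and lifted coordinates, whereas you spell out why the path and cut inequalities force $x_{uv} = \max_{e \in P_{uv}} x_e$ for binary $x$ and hence make the map $y \mapsto x$ a bijection.
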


\begin{proof}
For any distinct pair of nodes $u,v \in V$, we introduce a binary variable $x_{uv} \in \{0,1\}$ via
\begin{align}
x_{uv} = 1 - \prod_{e \in P_{uv}} y_e
\end{align}
which implies
\begin{align}
x_{uv} = 0
& \quad \iff \quad 
\forall e \in P_{uv} \colon \ y_e = 1 \nonumber \\
& \quad \iff \quad 
\forall e \in P_{uv} \colon \ x_e = 0. \label{eq:vars-transform}
\end{align}
Therefore, we can reformulate problem \eqref{eq:problem-pbo} in terms of the variables $x_{uv}$ by transforming the objective function according to
\begin{align}
c_{uv} \prod_{e \in P_{uv}} y_e & = -c_{uv} \big(1 - \prod_{e \in P_{uv}} y_e \big) + c_{uv} \nonumber \\
& = -c_{uv} \, x_{uv} + c_{uv}.
\end{align}
This leads to the combinatorial optimization problem
\begin{align}
\min_{x \in X_T} \sum_{\{u,v\} \in \mybinom{V}{2}} -c_{uv} \, x_{uv} + c_{uv}, \label{eq:tree-lmp-problem}
\end{align}
where $X_T \subseteq \{0,1\}^m$ captures the relationship \eqref{eq:vars-transform}.
\end{proof}

Note that, since the objective function of \eqref{eq:problem-lmp} is linear, we can replace $X_T$ by its convex hull
\begin{align}
\Xi_T = \conv X_T,
\end{align}
which is called the \emph{lifted multicut polytope} w.r.t.\ $T$. This motivates the study of the structure of $\Xi_T$ in Section \ref{sec:polyhedral}.
We refer to \eqref{eq:tree-lmp-path} and \eqref{eq:tree-lmp-cut} as \emph{path} and \emph{cut} inequalities, respectively, and write $\Theta^0_T$ for the naive linear relaxation of $\Xi_T$, i.e.\ the set of vectors $x \in [0,1]^m$ that satisfy \eqref{eq:tree-lmp-path} and \eqref{eq:tree-lmp-cut}.

\subsection{Complexity}

We now discuss the computational complexity of the \treeproblem{} and the \pathproblem{}.
In Lemma~\ref{lem:np-hardness}, we show that the \treeproblem{} is \np-hard, even if the tree is a star.
On the contrary, the \pathproblem{} is polynomial time solvable \citep{Kernighan1971}.

\begin{lem} 
\label{lem:np-hardness}
The \treeproblem{} is \np-hard.
It remains \np-hard if $T$ is a star.
\end{lem}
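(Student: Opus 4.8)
The plan is to prove the stronger of the two assertions, namely that the problem is already \np-hard when $T$ is a star, since every star is a tree and the first claim then follows a fortiori. The key observation is that on a star the objective \eqref{eq:problem-pbo} is exactly a \emph{general} quadratic pseudo-Boolean function in the edge variables, so \np-hardness will follow by reduction from a quadratic optimization problem such as \textsc{max-cut}.

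First I would make the correspondence explicit. Let $T$ be the star with center $r$ and leaves $\ell_1, \dots, \ell_k$, and abbreviate $y_i := y_{r\ell_i}$. For two leaves the unique path $P_{\ell_i \ell_j}$ consists of the two edges $\{r,\ell_i\}$ and $\{r,\ell_j\}$, whereas $P_{r\ell_i}$ is the single edge $\{r,\ell_i\}$. Substituting this into \eqref{eq:problem-pbo}, the objective reads
\begin{align*}
\sum_{i \in [k]} c_{r\ell_i}\, y_i \;+\; \sum_{1 \le i < j \le k} c_{\ell_i \ell_j}\, y_i y_j ,
\end{align*}
which is a general quadratic pseudo-Boolean function in $y_1, \dots, y_k$. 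Since the linear coefficients $c_{r\ell_i}$ and the quadratic coefficients $c_{\ell_i \ell_j}$ may be chosen freely, every quadratic pseudo-Boolean function (up to an additive constant, which does not affect the minimizer) arises as the objective of some star instance.

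Next I would exhibit the reduction. Given an instance $G = (V_G, E_G)$ of \textsc{max-cut}, I build the star whose leaves are indexed by $V_G$ and set $c_{r\ell_v} = -\deg_G(v)$ and $c_{\ell_u \ell_v} = 2\,\1[\{u,v\} \in E_G]$. Reading $y_v \in \{0,1\}$ as the side of $v$ in a bipartition, the star objective then equals $\sum_{\{u,v\}\in E_G}(2\,y_u y_v - y_u - y_v)$, and each edge contributes $-1$ exactly when it is cut and $0$ otherwise, so the objective is the negative of the number of cut edges. Consequently a minimizer of the star instance is a maximum cut of $G$, and the construction is clearly polynomial. As \textsc{max-cut} is \np-hard, so is the \treeproblem{} restricted to stars, and therefore the \treeproblem{} in general.

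The step I expect to require the most care is not any deep argument but the verification of the two directions of the reduction: confirming that the star objective realizes an arbitrary quadratic form (so that the hard source problem genuinely lies within the star case), and checking that the chosen coefficients make the star objective agree, up to sign, with the cut value for every $y$. Once these identities are in place the hardness is immediate; alternatively, one may simply invoke the known \np-hardness of quadratic pseudo-Boolean minimization \citep{Boros2002} in place of the explicit \textsc{max-cut} reduction.
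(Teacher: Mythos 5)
Your proof is correct and takes essentially the same route as the paper: identify the objective on a star with a general (unconstrained) quadratic pseudo-Boolean function in the edge variables and invoke the \np-hardness of minimizing such functions. The paper stops at citing this equivalence, whereas you additionally spell out an explicit \textsc{max-cut} reduction, which is a valid (and correctly verified) way to substantiate the same claim.
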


\begin{proof}
For a tree-sparse pseudo-Boolean function defined on a star with $n$ leaves, problem \eqref{eq:problem-pbo} is equivalent to the unconstrained binary quadratic program with $n$ variables, which is well-known to be \np-hard.
\end{proof}


\section{Polyhedral Geometry} 
\label{sec:polyhedral}

\begin{figure*}
\subfloat[]{%
\begin{tikzpicture}[shift={(0,-0.35)},scale=1.3]
\node (0) [style=vertex,fill=white,label=below:{\scriptsize $0$}] at (0, 0.2) {};
\node (1) [style=vertex,fill=white,label=below:{\scriptsize $1$}] at (0.6, 0.6) {};
\node (2) [style=vertex,fill=white,label=below:{\scriptsize $2$}] at (1.3, 1.0) {};
\node (3) [style=vertex,fill=white,label=right:{\scriptsize $3$}] at (0.7, 1.2) {};
\node (4) [style=vertex,fill=white,label=below:{\scriptsize $4$}] at (0.1, 1.5) {};
\node (5) [style=vertex,fill=white,label=below:{\scriptsize $5$}] at (1.9, 0.7) {};
\node (6) [style=vertex,fill=white,label=below:{\scriptsize $6$}] at (1.7, 1.5) {};

\draw (0) edge[base] node[right] {} (1);
\draw (1) edge[base] node[right] {} (2);
\draw (1) edge[base] node[right] {} (3);
\draw (3) edge[base] node[right] {} (4);
\draw (2) edge[base] node[right] {} (5);
\draw (2) edge[base] node[right] {} (6);

\draw[-] (4) edge[bend left=50,lifted] node[below,black] {$x_{46}$} (6);
\draw[-] (0) edge[bend left=30,lifted] node[above] {} (3);
\draw[-] (0) edge[bend right=40,lifted] node[above,black] {$x_{05}$} (5);
\end{tikzpicture}
\label{fig:graphs-tree}
}
\hfill
\subfloat[]{%
\begin{tikzpicture}[scale=1]
\node (1) [style=vertex,fill=white,label=left:{\scriptsize $1$}] at (0, 0) {};
\node (2) [style=vertex,fill=white,label=right:{\scriptsize $3$}] at (2, 0) {};
\node (3) [style=vertex,fill=white,label=above:{\scriptsize $2$}] at (1, 1.73) {};
\node (4) [style=vertex,fill=white,label=below:{\scriptsize $0$}] at (1, 0.575) {};

\draw[-] (1) edge[bend right=50,lifted] node[above,black] {$x_{13}$} (2);
\draw[-] (2) edge[bend right=50,lifted] node[right,black] {$x_{23}$} (3);
\draw[-] (1) edge[bend left=50,lifted] node[left,black] {$x_{12}$} (3);
\draw (3) edge[base] (4);
\draw (1) edge[base] node[above=4] {$x_{01}$} (4);
\draw (2) edge[base] (4);
\end{tikzpicture}
\label{fig:graphs-star}
}
\hfill
\subfloat[]{%
\begin{tikzpicture}[shift={(0,0.25)},scale=1]
\node (0) [style=vertex,fill=white,label=below:{\scriptsize $0$}] at (0, 0) {};
\node (1) [style=vertex,fill=white,label=below:{\scriptsize $1$}] at (1, 0) {};
\node (2) [style=vertex,fill=white,label=below:{\scriptsize $2$}] at (2, 0) {};
\node (3) [style=vertex,fill=white,label=below:{\scriptsize $3$}] at (3, 0) {};

\draw (0) edge[base] node[right] {} (1);
\draw (1) edge[base] node[above] {$x_{12}$} (2);
\draw (2) edge[base] node[above] {} (3);
\draw[-] (0) edge[bend left=60,lifted] node[above=3,right=10,black] {$x_{02}$} (2);
\draw[-] (1) edge[bend right=55,lifted] node[below,black] {$x_{13}$} (3);
\draw[-] (0) edge[bend left=80,lifted] node[above,black] {$x_{03}$} (3);
\end{tikzpicture}
\label{fig:graphs-path}
}
\hfill
\subfloat[]{%
\begin{tikzpicture}[scale=1.3]
\draw[base] plot [smooth] coordinates {(1,0) (1.2,0.1) (1.45,0.05)};
\draw[base] plot [smooth] coordinates {(2.0,-0.05) (2.3,-0.1) (2.5,0)};
\node (dots) at (1.75,0) {$\dotso$};
\node (0) [style=vertex,fill=white,label=left:{\scriptsize $u$}] at (0.5, 0.5) {};
\node (1) [style=vertex,fill=white,label=below:{\scriptsize $\vec{u}(v)$}] at (1, 0) {};
\node (2) [style=vertex,fill=white,label=below:{\scriptsize $v$}] at (2.5, 0) {};

\draw (0) edge[base] node[right] {} (1);
\draw[-] (0) edge[bend left=70,lifted] node[above=3,right=10,black] {} (2);
\draw[-] (1) edge[bend left=60,lifted] node[below,black] {} (2);
\end{tikzpicture}
\label{fig:graphs-triangle}
}
\caption{\ref{fig:graphs-tree}: A tree with additional lifted edges (thin blue curves) between non-neighboring nodes corresponding to long range variables. \ref{fig:graphs-star}: A star with three leaves lifted to the complete graph. \ref{fig:graphs-path}: A path of length $3$ lifted to the complete graph. \ref{fig:graphs-triangle}: The node $\vec{u}(v)$ is the first node on the path $P_{uv}$.}
\label{fig:graphs}
\end{figure*}
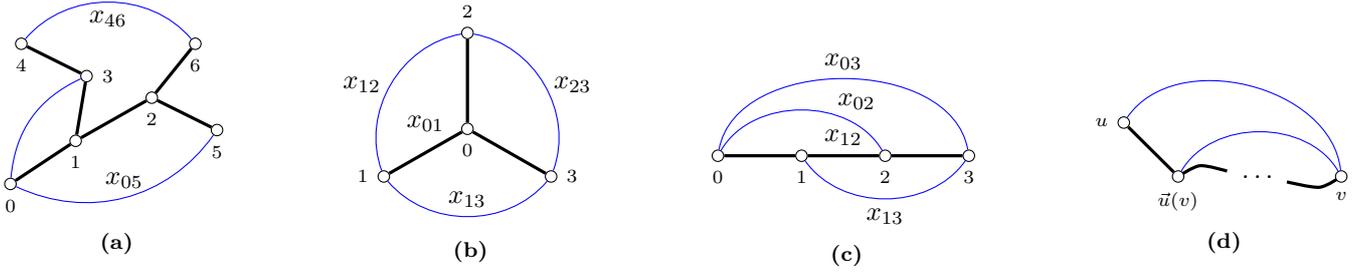

In this section, we establish polyhedral results for the lifted multicut polytope $\Xi_T$ for trees. We characterize all trivial facets and offer a tighter outer relaxation of $\Xi_T$. In Section~\ref{sec:paths}, we show that our results yield a complete totally dual integral (TDI) description of the lifted multicut polytope for paths. This result relates the combinatorial properties of the sequential set partitioning problem to the geometry of the minimum cost lifted multicut problem for paths. The rather technical proofs for our claims are deferred to appendix \ref{sec:proofs}.

\subsection{Canonical Outer Relaxation} \label{sec:tree-relax}
We give another simple relaxation of $\Xi_T$ that is at least as tight as the naive linear relaxation $\Theta^0_T$. To this end, define $\vec{u}(v)$ to be the first node on the path $P_{uv}$ that is different from both $u$ and $v$ (cf.\ Fig.\ \ref{fig:graphs-triangle}) and consider the polytope
\begin{align}
\Theta^1_T & = \Big \{ x \in [0,1]^m \; \big | \nonumber \\
& x_{uv} \leq x_{u,\vec{u}(v)} + x_{\vec{u}(v),v} && \forall u,v \in V, \; \dist(u,v) \geq 2, \label{eq:tree-path}  \\
& x_{\vec{u}(v),v} \leq x_{uv} && \forall u,v \in V, \; \dist(u,v) \geq 2 \label{eq:tree-cut} \Big \}.
\end{align}

This description is canonical in the sense that it only considers a quadratic number of node triplets, namely those which feature two neighboring nodes and an arbitrary third node. The following lemma states that $\Theta^1_T$ is indeed an outer relaxation of $\Xi_T$ which is at least as tight as $\Theta^0_T$.

\begin{lem} \label{lem:tree-polytope}
It holds that $\Xi_T \subseteq \Theta^1_T \subseteq \Theta^0_T$.
\end{lem}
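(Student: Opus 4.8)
The plan is to prove the two inclusions $\Xi_T \subseteq \Theta^1_T$ and $\Theta^1_T \subseteq \Theta^0_T$ separately. The first is most naturally verified on the vertices of the integral set $X_T$, whereas the second is a statement about arbitrary fractional points and is best handled by an inductive ``path-peeling'' argument on $\dist(u,v)$.

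For $\Xi_T \subseteq \Theta^1_T$, I would exploit that $\Theta^1_T$ is convex, so it suffices to verify that every integral $x \in X_T$ satisfies the defining inequalities \eqref{eq:tree-path} and \eqref{eq:tree-cut}; passing to $\conv X_T = \Xi_T$ then gives the inclusion. The key observation is that, for integral $x$, the inequalities \eqref{eq:tree-lmp-path} and \eqref{eq:tree-lmp-cut} force the identity $x_{uv} = \max_{e \in P_{uv}} x_e$: the cut inequalities yield $x_{uv} \geq \max_e x_e$, while the path inequality forces $x_{uv} = 0$ whenever all $x_e$ on $P_{uv}$ vanish. Writing $w = \vec{u}(v)$ and splitting $P_{uv}$ into its first edge $\{u,w\}$ and the subpath $P_{wv}$, the cut inequality \eqref{eq:tree-cut} then follows from $P_{wv} \subseteq P_{uv}$ (the maximum over the subpath is dominated), and the path inequality \eqref{eq:tree-path} follows from $\max(a,b) \leq a + b$ for nonnegative $a,b$.

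For $\Theta^1_T \subseteq \Theta^0_T$, I would fix $x \in \Theta^1_T$ and prove the $\Theta^0_T$ path and cut inequalities by induction on $k = \dist(u,v)$, noting that both descriptions share the box constraints $x \in [0,1]^m$ and impose nothing at distance $1$. Setting $w = \vec{u}(v)$, inequality \eqref{eq:tree-path} reads $x_{uv} \leq x_{uw} + x_{wv}$; since $\dist(w,v) = k-1$, the induction hypothesis bounds $x_{wv} \leq \sum_{e \in P_{wv}} x_e$, and adding the edge term $x_{uw}$ recovers \eqref{eq:tree-lmp-path}. The base case $k = 2$ is immediate, since there \eqref{eq:tree-path} is literally \eqref{eq:tree-lmp-path}.

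The step I expect to be the main obstacle is the cut inequality $x_e \leq x_{uv}$ for an arbitrary edge $e \in P_{uv}$, because the canonical inequality \eqref{eq:tree-cut} only controls the ``far'' segment $x_{\vec{u}(v),v}$. For an edge $e$ lying in $P_{wv}$ this combines cleanly with the induction hypothesis on the shorter pair $(w,v)$ and with \eqref{eq:tree-cut}. The remaining case is the first edge $e = \{u,w\}$: here I would invoke the reverse orientation, namely \eqref{eq:tree-cut} applied to the ordered pair $(v,u)$, giving $x_{\vec{v}(u),u} \leq x_{uv}$, together with the induction hypothesis applied to the pair $(\vec{v}(u),u)$ of distance $k-1$, on whose path the edge $\{u,w\}$ lies. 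Managing this asymmetry --- ensuring that both orientations of the canonical inequalities are used and that the first edge is reached from the opposite endpoint --- is the crux; everything else reduces to the monotonicity $P_{wv} \subseteq P_{uv}$ of subpaths and the subadditivity of the maximum.
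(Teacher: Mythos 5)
Your proposal is correct and follows essentially the same route as the paper: the first inclusion is verified on integral points and transferred to $\Xi_T$ by convexity (your $x_{uv}=\max_{e\in P_{uv}}x_e$ identity is just a compact packaging of the paper's contradiction argument, and you make the reduction to $X_T$ explicit where the paper leaves it implicit), and the second inclusion is proved by the same induction on $\dist(u,v)$, with the first edge of $P_{uv}$ handled via the reversed orientation of \eqref{eq:tree-cut}, exactly as the paper's ``w.l.o.g.'' step does.
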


\subsection{Trivial Facets} 
\label{sec:tree-facets}

It can be characterized exactly which inequalities of \eqref{eq:tree-path}, \eqref{eq:tree-cut} and $0 \leq x_{uv} \leq 1$ define facets of $\Xi_T$. These elementary results are closely related to the more general study in \citep{hornakova-2017}.

\begin{lem} \label{lem:tree-path-facets}
The inequality \eqref{eq:tree-path} defines a facet of $\Xi_T$ if and only if $\dist(u,v) = 2$.
\end{lem}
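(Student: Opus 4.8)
The plan is to handle the two directions separately, in both cases using that $\Xi_T$ is full-dimensional, so that a valid inequality is facet-defining exactly when its induced face contains $m$ affinely independent vertices of $\Xi_T$. Full-dimensionality itself I would establish by the explicit construction used below: writing $\chi^C \in X_T$ for the multicut that cuts precisely the edge set $C \subseteq E$ (every subset of tree edges yields a feasible decomposition), the $m+1$ points given by $0$, the single-edge cuts $\chi^{\{e\}}$ for $e \in E$, and the two-edge cuts $\chi^{\{e_f,e_f'\}}$ for each lifted pair $f$ — where $e_f, e_f'$ are the edges of $P_f$ incident to its two endpoints — turn out to be affinely independent. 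Throughout I abbreviate $p := \vec{u}(v)$ and use that at every vertex $x_{uv} = x_{up} \lor x_{pv}$, so that \eqref{eq:tree-path}, valid by Lemma~\ref{lem:tree-polytope}, is tight exactly at the multicuts with $x_{up}\,x_{pv} = 0$.

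For the necessity direction, I would show that $\dist(u,v)\ge 3$ forces a second, independent equation on the whole face. Let $q$ be the node of $P_{uv}$ at distance $2$ from $u$, so that $\{u,p\}$ and $\{p,q\}$ are edges and $q\neq v$. I claim every vertex of the face satisfies $x_{uq} = x_{up} + x_{pq}$: if $x_{up}=1$, tightness of \eqref{eq:tree-path} gives $x_{pv}=0$, i.e.\ $p$ and $v$ share a component, and since $\{p,q\}$ lies on $P_{pv}$ this yields $x_{pq}=0$; hence $x_{up}\,x_{pq}=0$ and $x_{uq}=x_{up}\lor x_{pq}=x_{up}+x_{pq}$. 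The functionals $x_{uv}-x_{up}-x_{pv}$ and $x_{uq}-x_{up}-x_{pq}$ are linearly independent because $v\neq q$, so the face lies in the intersection of two distinct hyperplanes and has dimension at most $m-2$; being full-dimensional, $\Xi_T$ then cannot have this set as a facet.

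For the sufficiency direction, where $\{u,p\}$ and $\{p,v\}$ are both tree edges, I would produce $m$ affinely independent vertices on the face: the point $0$, the $n$ single-edge cuts $\chi^{\{e\}}$, and the two-edge cuts $z^f := \chi^{\{e_f,e_f'\}}$ for every lifted pair $f\neq\{u,v\}$, giving $1+n+(m-n-1)=m$ points. Each lies on the face, since the only two-edge cut severing both $\{u,p\}$ and $\{p,v\}$ is the one indexed by $f=\{u,v\}$, which is omitted. Affine independence reduces to linear independence of the $m-1$ differences from $0$. Ordering coordinates with the tree edges first, the single-edge cuts form an identity block; subtracting $\chi^{\{e_f\}}$ and $\chi^{\{e_f'\}}$ from each $z^f$ leaves, on the lifted coordinates, the vector whose entry at position $g$ equals $-1$ if $\{e_f,e_f'\}\subseteq P_g$ and $0$ otherwise. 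Ordering the lifted pairs by non-decreasing path length then renders this matrix upper triangular with diagonal entries $-1$, hence of full row rank.

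The main obstacle is the bookkeeping in this last step: one must verify the combinatorial identity that $\{e_f,e_f'\}\subseteq P_g$ holds if and only if $P_f\subseteq P_g$, and that this containment relation is compatible with the length ordering so that the $\pm 1$ matrix is genuinely triangular. Once that identity is in hand, invertibility of the principal submatrix indexed by the lifted pairs other than $\{u,v\}$ is immediate, and running the same computation with $f=\{u,v\}$ reinstated yields the full-dimensionality of $\Xi_T$ invoked throughout.
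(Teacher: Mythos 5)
Your proof is correct, and it diverges from the paper's in both directions in instructive ways. For necessity, both arguments exhibit a second linear equation that holds on the whole face and is independent of $x_{uv}=x_{u,\vec u(v)}+x_{\vec u(v),v}$, but you choose the triangle equality $x_{uq}=x_{up}+x_{pq}$ at the near end of $P_{uv}$ (with $q$ the node at distance $2$ from $u$), whereas the paper derives the square equality \eqref{eq:square-equality} involving $\vec v(u)$; your case analysis ($x_{up}=1$ forces $x_{pv}=0$, hence $x_{pq}=0$, hence $x_{up}x_{pq}=0$ always) is shorter than the paper's three-case check, and it has the pleasant interpretation that the face of a long-range path inequality is contained in the face of a distance-$2$ path inequality. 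For sufficiency, the paper simply invokes Theorem~4 of \citet{hornakova-2017} via chordlessness of $P_{uv}$ in the complete graph, while you give a self-contained construction of $m$ affinely independent vertices (the empty cut, the single-edge cuts, and the end-edge two-edge cuts $\chi^{\{e_f,e_f'\}}$ for $f\neq\{u,v\}$) and reduce independence to an upper-triangular $\pm1$ matrix; the combinatorial identity you flag, that $\{e_f,e_f'\}\subseteq P_g$ iff $P_f\subseteq P_g$, does hold in a tree (a path containing both end-edges of $P_f$ contains both endpoints of $f$, hence the unique $f$-path), and strict containment forces strictly greater length, so the triangularity goes through. Your route costs more bookkeeping but buys independence from the external reference and, as you note, yields full-dimensionality of $\Xi_T$ as a byproduct; the paper's route is shorter but leans on \citet{hornakova-2017} for both the dimension of $\Xi_T$ and the $\dist(u,v)=2$ case.
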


\begin{lem} \label{lem:tree-cut-facets}
The inequality \eqref{eq:tree-cut} defines a facet of $\Xi_T$ if and only if $v$ is a leaf of $T$.
\end{lem}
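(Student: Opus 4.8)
The plan is to apply the standard facet criterion for a full-dimensional polytope. Since $T$ is connected, $\Xi_T$ is full-dimensional \citep{hornakova-2017}, so \eqref{eq:tree-cut} defines a facet iff every linear equation $\langle d, x\rangle = d_0$ satisfied by all points of the induced face is a scalar multiple of it. Write $w := \vec u(v)$ and $e_0 := \{u,w\}$, so \eqref{eq:tree-cut} reads $x_{wv} - x_{uv} \le 0$, and let $\mathcal F := \{x \in \Xi_T : x_{wv} = x_{uv}\}$. By Lemma~\ref{lem:tree-lmp} the vertices of $\Xi_T$ are the incidence vectors $x^F$ of cut-edge sets $F \subseteq E$, with $x^F_{pq} = \1[F \cap P_{pq} \neq \emptyset]$. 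Since $P_{uv} = \{e_0\} \cup P_{wv}$, a direct check shows $x^F \in \mathcal F$ exactly when $F$ avoids the \emph{forbidden pattern} ``$e_0 \in F$ and $F \cap P_{wv} = \emptyset$''.

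For the direction that a facet forces $v$ to be a leaf, suppose $v$ is internal and pick a neighbour $v''$ of $v$ off $P_{uv}$. Then $w = \vec u(v'')$ and $P_{wv''} = P_{wv} \cup \{\{v,v''\}\} \supseteq P_{wv}$, so $x_{wv''} \le x_{uv''}$ is itself an instance of \eqref{eq:tree-cut}. I would show it holds with equality on all of $\mathcal F$: for $x^F \in \mathcal F$ with $e_0 \in F$ the set $F$ meets $P_{wv} \subseteq P_{wv''}$, forcing $x^F_{wv''} = x^F_{uv''} = 1$, while for $e_0 \notin F$ equality is immediate. Hence $\mathcal F \subseteq \{x_{wv''} = x_{uv''}\}$, a hyperplane whose equation involves a disjoint set of coordinates and is therefore independent of $x_{wv} = x_{uv}$; thus $\dim \mathcal F \le m-2$ and \eqref{eq:tree-cut} is not a facet.

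For the converse, assume $v$ is a leaf and let $\langle d, x\rangle = d_0$ hold on $\mathcal F$. Evaluating at $x^\emptyset = 0$ gives $d_0 = 0$, after which I eliminate the coordinates by integer combinations of on-face vertices whose coefficients sum to zero. For a tree edge $e$, the all-cut vertex $x^E$ and $x^{E \setminus \{e\}}$ differ only in coordinate $e$, i.e.\ $x^E - x^{E\setminus\{e\}} = \epsilon_e$, and both lie on $\mathcal F$ except when $e = \{w,v\}$ in the case $\dist(u,v)=2$; this yields $d_e = 0$ for every tree-edge coordinate other than the special pair $\{w,v\}$. For a lifted pair $\{p,q\}$ with terminal path edges $f_1, f_2$, I use a four-term inclusion-exclusion: setting $F_S := (E \setminus (P_{pq} \cup \{e_0\})) \cup S$ for $S \subseteq \{f_1,f_2\}$,
\[
\sum_{S \subseteq \{f_1,f_2\}} (-1)^{|S|}\, x^{F_S} = -\,\epsilon_{pq},
\]
because a coordinate survives only if both $f_1, f_2$ lie on its path and that path stays inside the joined blob $P_{pq} \cup \{e_0\}$, which pins the pair down to $\{p,q\}$. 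Keeping $e_0$ in the joined part ensures $e_0$ is never cut (or, in the degenerate case where $e_0$ is terminal, that $P_{wv}$ is already cut in the base), so none of the four configurations realises the forbidden pattern; hence $d_{pq} = 0$ for every lifted pair save $\{u,v\}$ and $\{w,v\}$. Finally, evaluating at $x^E \in \mathcal F$, where $x^E_{uv} = x^E_{wv} = 1$, gives $d_{uv} + d_{wv} = 0$, so $(d,d_0) = d_{wv}(\epsilon_{wv} - \epsilon_{uv}, 0)$ and $\mathcal F$ is a facet.

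I expect the isolation step for the lifted coordinates to be the main obstacle: the four-term combination must avoid the forbidden pattern for every position of $\{p,q\}$ relative to $u, w, v$, and verifying that some admissible base (cutting $P_{pq} \cup \{e_0\}$, or a local variant) keeps all four toggled configurations on $\mathcal F$ is the one genuinely case-dependent part. The leaf hypothesis on $v$ is exactly what makes this work: the unique edge at $v$ touches only the pairs containing $v$, which both prevents the special coordinates $x_{uv}, x_{wv}$ from being isolated away and is, as the converse direction shows, the very property that fails once $v$ has a second incident edge.
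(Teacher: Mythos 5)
Your ``only if'' direction is correct and is essentially the paper's own argument: the paper likewise exhibits a second, independent equality (there written $x_{uw}=x_{\vec u(v),w}$ for a neighbour $w$ of $v$ off $P_{uv}$; your $x_{uv''}=x_{wv''}$) that holds on the entire face, forcing codimension at least $2$. For the ``if'' direction the paper only sketches a dimension count modelled on Theorem~7 of \citet{hornakova-2017}, while you attempt an explicit elimination; the strategy is legitimate, but your key identity is false as stated. With $F_S=(E\setminus(P_{pq}\cup\{e_0\}))\cup S$, a coordinate $\{p',q'\}$ survives the alternating sum with coefficient $-1$ exactly when $P_{p'q'}\subseteq P_{pq}\cup\{e_0\}$ and $P_{p'q'}$ contains both $f_1$ and $f_2$. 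This does \emph{not} pin $\{p',q'\}$ down to $\{p,q\}$: whenever $e_0\notin P_{pq}$ but exactly one of $u,w$ is an endpoint of $P_{pq}$, the edge $e_0$ is pendant at that endpoint and $P_{pq}$ extended by $e_0$ is a second qualifying path. For instance, with $p=w$ and $q$ on the far side of $w$ from $u$, the path $P_{uq}=\{e_0\}\cup P_{wq}$ lies in the blob and contains both terminal edges of $P_{wq}$, so the sum equals $-\epsilon_{wq}-\epsilon_{uq}$ and you only obtain $d_{wq}+d_{uq}=0$, not $d_{wq}=0$.

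The gap is repairable, and the repair is where the leaf hypothesis actually enters. First run the combination on all pairs whose path \emph{contains} $e_0$: there the blob is just $P_{pq}$, the identity is clean, and the forbidden pattern can only arise when $e_0$ is terminal in $P_{pq}$ and $P_{wv}\subseteq P_{pq}$, which (because $v$ is a leaf, so the path cannot continue past $v$) forces $P_{pq}=P_{uv}$ --- the one pair you must exclude anyway. This gives $d=0$ on every such coordinate other than $\{u,v\}$, after which the two-term relations $d_{pq}+d_{p''q''}=0$ from the problematic cases collapse to $d_{pq}=0$, the sole survivor being $d_{wv}+d_{uv}=0$. You should also state explicitly that it suffices to check $\langle d,x\rangle=d_0$ on the vertices of the face. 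As written, however, the isolation step --- which you yourself flag as the unresolved part --- does not go through.
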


\begin{lem} \label{lem:tree-box-facets}
For any $u,v \in V,\; u \neq v$, the inequality $x_{uv} \leq 1$ defines a facet of $\Xi_T$ if and only if both $u$ and $v$ are leaves of $T$. Moreover, none of the inequalities $0 \leq x_{uv}$ define facets of $\Xi_T$.
\end{lem}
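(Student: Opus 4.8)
The plan is to work with the bijection between decompositions and cut-edge sets: for $F \subseteq E$ let $x^F \in \{0,1\}^m$ be defined by $x^F_{pq} = 1$ iff $P_{pq} \cap F \neq \emptyset$, so that $X_T = \{x^F : F \subseteq E\}$. I will assume $\Xi_T$ (and the analogous polytopes for subtrees) are full-dimensional, $\dim \Xi_T = m$; this is proved by induction on $|V|$ via deletion of a leaf, the same device used below, with the single-edge tree as base case. Under full-dimensionality a valid inequality $c^\top x \le \delta$ defines a facet iff its face is nonempty, proper, and every linear equation holding on the face is a scalar multiple of $c^\top x = \delta$. I verify this criterion for $x_{uv} \le 1$ when $u,v$ are leaves, and refute it in the remaining cases and for $x_{uv} \ge 0$. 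I assume $|V| \ge 3$ (for $|V|=2$ there is no third node and the ``moreover'' part degenerates).

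Both non-facet directions rest on one monotonicity fact: if $P_{pq} \subseteq P_{p'q'}$ then $x_{pq} = 1 \Rightarrow x_{p'q'} = 1$ on $X_T$. Suppose $v$ is not a leaf; then $v$ has a neighbor $v'' \neq \vec{v}(u)$, whence $P_{uv} \subsetneq P_{uv''}$, so the face $\{x \in \Xi_T : x_{uv} = 1\}$ also satisfies $x_{uv''} = 1$. The gradients $e_{uv}$ and $e_{uv''}$ are linearly independent, so the face has dimension $\le m-2$ and is not a facet; the case of $u$ not a leaf is symmetric. For the ``moreover'' part, on the face $\{x_{uv} = 0\}$ the nodes $u,v$ lie in one component, so $x_{uq} = x_{vq}$ for every third node $q$; choosing any $q \notin \{u,v\}$ yields an equation with gradient $e_{uq} - e_{vq}$ independent of $e_{uv}$, so the face has dimension $\le m-2$ and $0 \le x_{uv}$ is never a facet.

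For the converse (both $u,v$ leaves $\Rightarrow$ facet), suppose $\sum_{\{p,q\}} a_{pq}\, x_{pq} = \beta$ holds for every $x \in X_T$ with $x_{uv} = 1$; the goal is to force $(a,\beta)$ proportional to $(e_{uv}, 1)$. Let $f_u$ be the unique edge at the leaf $u$. Since $f_u \in P_{uv}$, every $F = F' \cup \{f_u\}$ with $F' \subseteq E(T-u)$ lies in the face, and for such $F$ the node $u$ is isolated, so $x^F_{uq} = 1$ for all $q$ while $x^F_{pq}$ with $u \notin \{p,q\}$ equals the corresponding coordinate of $x^{F'}$ in the smaller tree $T-u$. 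Substituting, the equation reads $\sum_{\{p,q\}\not\ni u} a_{pq}\, x^{F'}_{pq} = \beta - \sum_{q \ne u} a_{uq}$ for all $F' \subseteq E(T-u)$; by full-dimensionality of $\Xi_{T-u}$ this forces $a_{pq} = 0$ for every pair avoiding $u$. The symmetric reduction at the leaf $v$ forces $a_{pq} = 0$ for every pair avoiding $v$, leaving only $a_{uv}$ possibly nonzero. Evaluating at $F = \{f_u\}$ then gives $\beta = a_{uv}$, so the face lies in a unique hyperplane and, being nonempty (cut $f_u$) and proper (the point $x^\emptyset$ has $x_{uv} = 0$), it is a facet.

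I expect the main obstacle to be the bookkeeping in the leaf-reduction: one must check that, as $F'$ ranges over subsets of $E(T-u)$, the coordinates $\{x^F_{pq} : u \notin \{p,q\}\}$ realize exactly $X_{T-u}$, and that deleting the leaf $u$ alters no path $P_{pq}$ with $u \notin \{p,q\}$ — both hinge on the fact that $f_u$ lies on $P_{pq}$ only when $u \in \{p,q\}$. The other delicate point is securing full-dimensionality of the relevant subpolytopes, on which the facet criterion and both reductions depend; this is exactly where the leaf-removal induction is carried out.
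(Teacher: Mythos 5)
Your proof is correct, but it takes a genuinely different route from the paper. The paper disposes of this lemma in three sentences by translating the leaf condition into the ``$u,v$ are a pair of $w$-$w'$-cut-vertices'' criterion of \citet{hornakova-2017} and invoking their Theorems 8 and 9; all the real work is outsourced to that reference. You instead give a direct, self-contained argument exploiting the tree structure: the non-facet directions follow from exhibiting a second independent equation valid on the face (the monotonicity $P_{uv} \subseteq P_{uv''} \Rightarrow (x_{uv}=1 \Rightarrow x_{uv''}=1)$ when $v$ is interior, and $x_{uq}=x_{vq}$ on $\{x_{uv}=0\}$), while the facet direction for two leaves is a clean leaf-deletion reduction: cutting the pendant edge $f_u$ isolates $u$, freezes all coordinates $x_{uq}$ at $1$, and lets $F'$ range over all of $X_{T-u}$ in the remaining coordinates, so full-dimensionality of $\Xi_{T-u}$ kills every coefficient $a_{pq}$ with $u \notin \{p,q\}$; the symmetric step at $v$ leaves only $a_{uv}$. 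The bookkeeping you flag (that $f_u$ lies on $P_{pq}$ only when $u \in \{p,q\}$, so the projection of the face onto the $u$-free coordinates is exactly $X_{T-u}$) does go through. The one external ingredient you rely on, $\dim \Xi_T = m$, is precisely the fact $\dim(\Xi_{GG'}) = \lvert E' \rvert$ that the paper itself imports from \citet{hornakova-2017}, so you are on equal footing there; your sketched leaf-deletion induction for it is plausible but not carried out. What each approach buys: the paper's is shorter and consistent with its general policy of reducing trivial facets to the known characterization; yours is elementary, makes the role of the leaves transparent, and as a bonus correctly isolates the degenerate case $\lvert V\rvert = 2$, where $\Xi_T = [0,1]$ and the ``moreover'' assertion actually fails --- a caveat the paper's statement silently omits.
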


\subsection{Non-Trivial Facets}

We further present a large class of non-trivial facets of $\Xi_T$. Consider the set of inequalities given by
\begin{align}
x_{uv} + x_{\vec{u}(v),\vec{v}(u)} & \leq x_{u,\vec{v}(u)} + x_{\vec{u}(v),v} \label{eq:tree-square} \\
& \forall u,v \in V,\; \dist(u,v) \geq 3. \nonumber
\end{align}
We shall refer to \eqref{eq:tree-square} as \emph{square} inequalities. As an example consider the graph depicted in Fig.\ \ref{fig:graphs-path} and $u=0$, $v=3$. Any square inequality is valid and facet-defining for $\Xi_T$.

\begin{lem} \label{lem:tree-square-valid}
The inequalities \eqref{eq:tree-square} are valid for $\Xi_T$.
\end{lem}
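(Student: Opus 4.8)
The plan is to use that $\Xi_T = \conv X_T$ together with the linearity of \eqref{eq:tree-square}: a linear inequality is valid for a convex hull if and only if it holds at every extreme point, so it suffices to verify \eqref{eq:tree-square} at each integral point $x \in X_T$. For such a point the path and cut inequalities \eqref{eq:tree-lmp-path} and \eqref{eq:tree-lmp-cut} pin down the multicut relationship exactly: for any pair $\{s,t\}$ one has $x_e \le x_{st}$ for every $e \in P_{st}$ and $x_{st} \le \sum_{e \in P_{st}} x_e$, so for $x_{st} \in \{0,1\}$ this forces $x_{st} = \max_{e \in P_{st}} x_e$ (this also holds trivially for adjacent pairs, where $P_{st}$ is a single edge). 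Thus $x_{st} = 1$ precisely when some edge of $P_{st}$ is cut.

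Next I would fix $u,v$ with $\dist(u,v) \ge 3$ and write $P_{uv} = (w_0, w_1, \dots, w_k)$, $w_0 = u$, $w_k = v$, $k \ge 3$, so that $\vec{u}(v) = w_1$ and $\vec{v}(u) = w_{k-1}$; the hypothesis $k \ge 3$ is exactly what guarantees $w_1 \ne w_{k-1}$, so the pair $\{\vec{u}(v), \vec{v}(u)\}$ is genuine. Since $T$ is a tree, all four pairs in \eqref{eq:tree-square} join nodes lying on $P_{uv}$, hence their connecting paths are the sub-paths $P_{uv} = (w_0,\dots,w_k)$, $P_{w_1,w_{k-1}} = (w_1,\dots,w_{k-1})$, $P_{w_0,w_{k-1}} = (w_0,\dots,w_{k-1})$, and $P_{w_1,w_k} = (w_1,\dots,w_k)$.

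I would then introduce the three Boolean quantities $a = x_{w_0 w_1}$ (first edge), $c = x_{w_{k-1} w_k}$ (last edge), and $b = \max_{1 \le i \le k-2} x_{w_i w_{i+1}} \in \{0,1\}$ (maximum over the middle edges, a single edge when $k = 3$). By the max-characterization above, $x_{uv} = \max(a,b,c)$, $x_{\vec{u}(v),\vec{v}(u)} = b$, $x_{u,\vec{v}(u)} = \max(a,b)$, and $x_{\vec{u}(v),v} = \max(b,c)$. The square inequality therefore reduces to the elementary claim $\max(a,b,c) + b \le \max(a,b) + \max(b,c)$ for $a,b,c \in \{0,1\}$, which I settle by cases: if $b = 1$ both sides equal $2$, and if $b = 0$ the claim becomes $\max(a,c) \le a + c$, true since $a,c \ge 0$.

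There is no deep structural difficulty here; the only point demanding care is the bookkeeping that splits the four connecting paths cleanly into "first edge / middle edges / last edge" without double-counting. This is precisely where $\dist(u,v) \ge 3$ is needed, both to make $\vec{u}(v)$ and $\vec{v}(u)$ distinct interior nodes and to ensure the middle block is disjoint from the first and last edges. Once this partition is established, the Boolean reduction and its two-case verification complete the argument.
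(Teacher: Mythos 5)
Your proof is correct and takes essentially the same route as the paper's: both verify the inequality on the integral points of $X_T$ by exploiting that the path and cut inequalities force $x_{st}$ to equal the maximum of the edge variables along $P_{st}$. The paper packages this as a case analysis on which right-hand-side terms vanish (a zero on a superpath propagates to every subpath), while you package it as the Boolean identity $\max(a,b,c)+b\le\max(a,b)+\max(b,c)$; the content is the same.
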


\begin{lem} \label{lem:tree-square-facets}
The inequalities \eqref{eq:tree-square} define facets of $\Xi_T$.
\end{lem}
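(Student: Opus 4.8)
The plan is to use the standard facet criterion for a full-dimensional polytope: since $\Xi_T$ is full-dimensional (so $\dim \Xi_T = m$), and the square inequality, rewritten as $\langle a, x\rangle \le 0$ with $a$ supported on the pairs $\{u,v\}, \{\vec u(v),\vec v(u)\}$ (coefficient $+1$) and $\{u,\vec v(u)\}, \{\vec u(v),v\}$ (coefficient $-1$), is valid by Lemma~\ref{lem:tree-square-valid}, it suffices to show that every affine equation $\langle \alpha, x\rangle = \beta$ satisfied on all of $F := \Xi_T \cap \{\langle a,x\rangle = 0\}$ is a scalar multiple of $\langle a, x\rangle = 0$. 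Throughout I identify a point of $X_T$ with its set of cut edges $D \subseteq E$, writing $x(D)_{pq} = \1[P_{pq} \cap D \neq \emptyset]$; every $D\subseteq E$ yields a valid point since $T$ is a tree. Writing $P_{uv}$ as $u = w_0, w_1, \dots, w_k = v$ with $k = \dist(u,v)\ge 3$, a short computation with the three quantities ``$e_1 = \{w_0,w_1\}$ is cut'', ``some middle edge $e_2,\dots,e_{k-1}$ is cut'' and ``$e_k = \{w_{k-1},w_k\}$ is cut'' shows that $x(D)$ satisfies $\langle a, x\rangle = 0$ for every $D$ except those that cut $e_1$ and $e_k$ but no middle edge. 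Hence $F = \conv X_T^{=}$, where $X_T^{=}$ collects all cut sets other than these \emph{excluded} ones.

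Now fix $\alpha,\beta$ with $\langle \alpha, x(D)\rangle = \beta$ for all $D \in X_T^{=}$. The heart of the argument is to show $\alpha_{rs} = 0$ for every pair $\{r,s\}$ outside the support of $a$. For each such pair I produce a family of cut sets in $X_T^{=}$ whose signed sum isolates the single coordinate $\{r,s\}$. Take $S := P_{rs}$ (its edge set) and let $D$ consist of all tree edges incident to $r$ or to $s$ except the two terminal edges of $P_{rs}$; this forces $r$ and $s$ to be endpoints of any uncut path. A Möbius computation then gives
\[
\sum_{S' \subseteq S} (-1)^{|S'|}\, x(D \cup S') = -\,\mathbf 1_{\{r,s\}},
\]
since $\{r,s\}$ is the unique pair whose path contains all of $S$ while meeting $D$ trivially. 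Provided every $D \cup S'$ lies in $X_T^{=}$, pairing this identity with $\alpha$ yields $-\alpha_{rs} = \beta \sum_{S'\subseteq S}(-1)^{|S'|} = 0$ because $S \neq \emptyset$, whence $\alpha_{rs} = 0$.

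It remains to guarantee that all $2^{|S|}$ sets $D \cup S'$ avoid the excluded pattern. Whenever a middle edge $e_j \notin P_{rs}$ exists I add $e_j$ to $D$: this keeps $P_{rs}$ uncut and $r,s$ terminal, while ensuring a middle edge is cut in every $D \cup S'$, so none is excluded. The main obstacle is the residual case in which $P_{rs}$ contains the whole middle segment $w_1 \cdots w_{k-1}$, so no middle edge can be spared. Here one argues directly that the excluded pattern cannot occur among the $D \cup S'$: a case analysis on the location of $r$ and $s$ relative to $w_0, w_1, w_{k-1}, w_k$ shows that, $r$ and $s$ being the extreme endpoints of $P_{rs}$, the set $D$ can never contain both $e_1$ and $e_k$ without also containing a middle edge. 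I expect this verification to be the most delicate part of the proof.

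Once $\alpha$ is known to vanish off the four special pairs, the remaining coefficients are pinned down by evaluating $\langle\alpha, x(D)\rangle = \beta$ on the cut sets $D = \emptyset$, $D = \{e_1\}$, $D = \{e_k\}$ and $D = \{e_j\}$ for a middle edge $e_j$, all of which lie in $X_T^{=}$. The first gives $\beta = 0$, and the remaining three force $\alpha_{u,\vec v(u)} = \alpha_{\vec u(v),v} = -\alpha_{uv}$ and $\alpha_{\vec u(v),\vec v(u)} = \alpha_{uv}$. Thus $(\alpha,\beta) = \alpha_{uv}\,(a,0)$, so the only equations valid on $F$ are multiples of $\langle a,x\rangle = 0$, and the square inequality defines a facet of $\Xi_T$.
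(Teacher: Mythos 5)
Your overall strategy differs from the paper's: the paper shows $\dim$ of the face is $m-1$ by constructing $m-1$ indicator vectors inside $\lin(S)$ (deferring the details to the analogous construction in Theorem~7 of \citet{hornakova-2017}), whereas you use the dual criterion that every equation valid on the face is a multiple of the defining one, together with a M\"obius inversion over cut sets. That is a legitimate and arguably cleaner route, and your identification of the face is correct: writing $P_{uv}=w_0\cdots w_k$, the points of $X_T$ off the face are exactly the cut sets containing $e_1$ and $e_k$ but no middle edge $e_2,\dotsc,e_{k-1}$, and your inclusion--exclusion identity $\sum_{S'\subseteq S}(-1)^{|S'|}x(D\cup S')=-\1_{\{r,s\}}$ is valid for your choice of $D$. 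The final step pinning down the four support coefficients is also fine.

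However, there is a genuine gap in the residual case where $P_{rs}$ contains all middle edges. Your justification addresses only whether $D$ can contain both $e_1$ and $e_k$, but the excluded pattern concerns $D\cup S'$, and $S'$ can supply $e_1$ and $e_k$: whenever $e_1\in P_{rs}$ (i.e., $r$ lies in the component of $T-e_1$ containing $w_0$) and $e_k\in D\cup P_{rs}$, the choice $S'\supseteq\{e_1,e_k\}\cap P_{rs}$ with no middle edge produces an excluded cut set. A minimal counterexample already lives on a path: take $T$ the path $0\,\text{--}\,1\,\text{--}\,2\,\text{--}\,3\,\text{--}\,4$, $u=0$, $v=3$, $\{r,s\}=\{1,4\}$. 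Then $D=\{\{0,1\}\}$, $S=P_{14}=\{\{1,2\},\{2,3\},\{3,4\}\}$, and $S'=\{\{2,3\}\}$ gives $D\cup S'=\{\{0,1\},\{2,3\}\}$, which cuts $e_1$ and $e_k$ but not the middle edge $\{1,2\}$; one checks $x_{03}+x_{12}=1<2=x_{02}+x_{13}$ at this point, so it lies off the face and the corresponding term in your telescoping sum need not evaluate to $\beta$. Since such pairs $\{r,s\}$ exist for every tree in which the middle segment $w_1\cdots w_{k-1}$ can be extended on both sides (in particular for all paths of length $\geq 4$, the case most relevant to Theorem~\ref{thm:path-tdi}), the argument as written does not establish $\alpha_{rs}=0$ for them. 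The approach is salvageable — for instance by first proving $\alpha_{rs}=0$ for all pairs admitting a spare middle edge and then deriving the remaining vanishing coefficients from differences of face points supported on already-handled coordinates — but this secondary argument is exactly the delicate step you flagged, and it is currently missing rather than merely unverified.
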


\subsection{Facets Related to the Boolean Quadric Polytope}

The argument to show Lemma \ref{lem:np-hardness} also implies that the lifted multicut polytope for a star with $n$ leaves is isomorphic to the \emph{Boolean Quadric Polytope} \citep{Padberg1989} of order $n$. Therefore, insights about the facial structure of the Boolean quadric polytope can be used to derive valid inequalities and facets for the more general polytope $\Xi_T$. In fact, any tree-sparse pseudo-Boolean function repeatedly admits substructures similar to the quadratic case. As this connection is beyond the scope of our paper, we only illustrate it with the following example. Consider the star graph depicted in Fig.\ \ref{fig:graphs-star}. It is known from \citep{Padberg1989} that the ``triangle'' inequality
\begin{align}
x_{01} + x_{23} \leq x_{12} + x_{13} \label{eq:triangle}
\end{align}
is valid and facet-defining for $\Xi_T$. Observe that, similarly, for the tree in Fig.\ \ref{fig:graphs-tree} 
\begin{align}
x_{02} + x_{56} \leq x_{05} + x_{06}
\end{align} 
holds true, but in this case $x_{02}, x_{05}$ and $x_{06}$ correspond to cubic instead of quadratic terms of the associated pseudo-Boolean function.

\subsection{Complete TDI Description for Paths}
\label{sec:paths}

Suppose the node set $V = \{0, \dotsc, n\}$ is linearly ordered and consider the path $P = (V,E)$, where the edge set is given by $E = \big\{ \{i,i+1\} \mid i \in \{0, \dotsc, n-1\} \big \}$. We show that the facets arising from \eqref{eq:tree-path} - \eqref{eq:tree-square} yield a complete description of $\Xi_P$.
For this section, we consider only paths of length $n \geq 2$, since for $n=1$ the polytope $\Xi_P = [0,1]$ is trivial. Let $\Theta^{\mathrm{Path}}_P$ be the polytope defined by
\begin{align}
\Theta^{\mathrm{Path}}_P = \Big \{ & x \in \R^m \; \big | \nonumber \\
 & x_{0n} \leq 1, \label{eq:path-box} \\
& x_{in} \leq x_{i-1,n} & \hspace{-7em} \forall i \in \{1, \dotsc, n-1\}, \label{eq:path-cut-right} \\
& x_{0i} \leq x_{0,i+1} & \hspace{-7em} \forall i \in \{1, \dotsc, n-1\}, \label{eq:path-cut-left} \\
& x_{i-1,i+1} \leq x_{i-1,i} + x_{i,i+1} & \hspace{-7em} \forall i \in \{1, \dotsc, n-1\}, \label{eq:path-triangle} \\
& x_{j,k} + x_{j+1,k-1} \leq x_{j+1,k} + x_{j,k-1} \nonumber \\
& \qquad \qquad \forall j,k \in \{0, \dotsc, n\}, \: j < k-2 \label{eq:path-square} \Big \}.
\end{align}

Note that the system \eqref{eq:path-box} - \eqref{eq:path-square} consists precisely of those trivial and square inequalities which we have shown to define facets of $\Xi_P$. We first prove that $\Theta^{\mathrm{Path}}_P$ indeed yields an outer relaxation of $\Xi_P$.

\begin{lem} \label{lem:path-polytope}
It holds that $\Xi_P \subseteq \Theta^{\mathrm{Path}}_P \subseteq \Theta^1_P$.
\end{lem}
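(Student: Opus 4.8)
The plan is to prove the two inclusions separately. For $\Xi_P \subseteq \Theta^{\mathrm{Path}}_P$ I would simply check that each defining inequality of $\Theta^{\mathrm{Path}}_P$ is valid for $\Xi_P$. On a linearly ordered path one has $\vec{u}(v) = u+1$ whenever $u < v$ (and $\vec{v}(u) = v-1$), so the triangle inequalities \eqref{eq:path-triangle} are exactly the distance-$2$ instances of the path inequality \eqref{eq:tree-path}, while \eqref{eq:path-cut-right} and \eqref{eq:path-cut-left} are instances of the cut inequality \eqref{eq:tree-cut} taken in the two orientations; all of these are valid by Lemma~\ref{lem:tree-polytope}. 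The box inequality \eqref{eq:path-box} holds because $\Xi_P \subseteq [0,1]^m$, and the square inequalities \eqref{eq:path-square} are precisely the path specialization of \eqref{eq:tree-square}, hence valid by Lemma~\ref{lem:tree-square-valid}. Since $\Theta^{\mathrm{Path}}_P$ is the intersection of these valid half-spaces, the inclusion follows.

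The substance lies in $\Theta^{\mathrm{Path}}_P \subseteq \Theta^1_P$, for which I must derive, from the constraints of $\Theta^{\mathrm{Path}}_P$, every path inequality \eqref{eq:tree-path}, every cut inequality \eqref{eq:tree-cut}, and (since $\Theta^1_P$ lives in $[0,1]^m$ by definition) the full box $0 \le x_{uv} \le 1$. The key device is to read the square inequality \eqref{eq:path-square} as a discrete monotonicity statement: writing it as $x_{j,k} - x_{j+1,k} \le x_{j,k-1} - x_{j+1,k-1}$ shows that $g_u(k) := x_{u,k} - x_{u+1,k}$ is non-increasing in the right endpoint $k$, while writing it as $x_{j,k} - x_{j,k-1} \le x_{j+1,k} - x_{j+1,k-1}$ shows that $h_v(j) := x_{j,v} - x_{j,v-1}$ is non-decreasing in the left endpoint $j$. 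Fixing $u < v$ with $\dist(u,v) \ge 2$, telescoping $g_u$ toward $k = n$ gives $g_u(v) \ge g_u(n)$, whose base value $g_u(n) \ge 0$ is exactly \eqref{eq:path-cut-right}; this yields the forward cut $x_{u+1,v} \le x_{uv}$. Telescoping $g_u$ toward the diagonal gives $g_u(v) \le g_u(u+2)$, and $g_u(u+2) \le x_{u,u+1}$ is exactly the triangle inequality \eqref{eq:path-triangle} at center $u+1$; this yields the forward path inequality $x_{uv} \le x_{u,u+1} + x_{u+1,v}$. The symmetric telescopings of $h_v$, anchored at \eqref{eq:path-cut-left} and at \eqref{eq:path-triangle} with center $v-1$, produce the backward cut $x_{u,v-1} \le x_{uv}$ and the backward path inequality $x_{uv} \le x_{u,v-1} + x_{v-1,v}$, exhausting \eqref{eq:tree-cut} and \eqref{eq:tree-path}.

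It remains to obtain the box constraints, the most delicate point since $\Theta^{\mathrm{Path}}_P$ carries no explicit sign or unit-upper-bound constraint beyond \eqref{eq:path-box}. For the upper bound I would chain the just-derived cuts: decreasing the left endpoint via $x_{j,v} \le x_{j-1,v}$ gives $x_{uv} \le x_{0,v}$, then the chain \eqref{eq:path-cut-left} gives $x_{0,v} \le x_{0,n}$, and \eqref{eq:path-box} closes with $x_{0,n} \le 1$. For the lower bound I would first show that every edge variable is nonnegative: for $i \ge 1$ the triangle at center $i$ gives $x_{i,i+1} \ge x_{i-1,i+1} - x_{i-1,i}$, which is nonnegative by the derived backward cut $x_{i-1,i} \le x_{i-1,i+1}$, and the edge $\{0,1\}$ is handled symmetrically via the triangle $(0,1,2)$ and the forward cut $x_{1,2} \le x_{0,2}$. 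Finally, for an arbitrary pair the forward-cut chain yields $x_{uv} \ge x_{v-1,v} \ge 0$.

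I expect the main obstacle to be twofold: the bookkeeping of index ranges and distance conditions at the two ends of the path — ensuring each telescoping step invokes a square inequality with $j < k-2$ and each anchoring step a legitimate (distance $\ge 2$, valid center) cut or triangle — and the non-obvious lower bound $x \ge 0$, which is not present as a constraint and must be assembled from triangle and cut inequalities rather than read off directly.
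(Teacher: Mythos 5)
Your proof is correct and follows essentially the same route as the paper's: the first inclusion is checked inequality-by-inequality via Lemmas~\ref{lem:tree-polytope} and~\ref{lem:tree-square-valid}, and the second inclusion uses the square inequalities \eqref{eq:path-square} to propagate the cut and path inequalities from the anchors \eqref{eq:path-cut-right}, \eqref{eq:path-cut-left} and \eqref{eq:path-triangle} --- your telescoping of $g_u$ and $h_v$ is exactly the paper's induction on distance, unrolled. The one substantive difference is that you also derive the box constraints $0 \leq x_{uv} \leq 1$, which the paper's proof of $\Theta^{\mathrm{Path}}_P \subseteq \Theta^1_P$ silently omits even though $\Theta^1_P \subseteq [0,1]^m$ by definition; your assembly of these bounds from the triangle, cut and square inequalities is valid and closes that small gap.
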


As our main result, we prove that $\Theta^{\mathrm{Path}}_P$ is in fact a complete description of $\Xi_P$. 
To derive this, we utilize the notion of total dual integrality. For an extensive reference on this subject we refer the reader to \citep{Schrijver1986}.

\begin{defn}
A system of linear inequalities $Ax \leq b$ with $A \in \Q^{k \times m}, \; b \in \Q^k$ is \emph{totally dual integral} (TDI) if for any $c \in \Z^m$ such that the linear program $\max \{ c^\top x \mid Ax \leq b \}$ is feasible and bounded, there exists an integral optimal dual solution.
\end{defn}

Total dual integrality is an important concept in polyhedral geometry as it serves as a sufficient condition on the integrality of polyhedra according to the following fact.

\begin{fact}[\citep{Edmonds1977}]
\label{fact:tdi-integral}
If $Ax \leq b$ is totally dual integral and $b$ is integral, then the polytope defined by $Ax \leq b$ is integral.
\end{fact}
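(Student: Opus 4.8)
The plan is to derive integrality from total dual integrality via the classical characterization of integral polyhedra through their linear optima. The crucial auxiliary ingredient is that a rational polyhedron $P = \{x \mid Ax \leq b\}$ is integral if and only if $\max\{c^\top x \mid x \in P\}$ is an integer for every $c \in \Z^m$ for which this maximum is finite (see \citep{Schrijver1986}). I would take this equivalence as the backbone of the argument and merely verify its hypothesis using the TDI assumption, which reduces the whole statement to a short duality computation.

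First I would fix an arbitrary integral objective $c \in \Z^m$. Since $\{x \mid Ax \leq b\}$ is assumed to be a polytope, hence bounded, the primal program $\max\{c^\top x \mid Ax \leq b\}$ is bounded, and it is feasible whenever $P \neq \emptyset$ (the empty polytope being trivially integral). By the definition of total dual integrality, the dual program $\min\{b^\top y \mid y \geq 0,\; A^\top y = c\}$ then admits an \emph{integral} optimal solution $y^\star \in \Z^k$. Invoking strong LP duality gives $\max\{c^\top x \mid x \in P\} = b^\top y^\star$, and since $b$ is integral by hypothesis and $y^\star$ is integral, the value $b^\top y^\star$ is an integer. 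As $c$ was an arbitrary integral vector, the primal optimum is an integer for every integral objective, so the characterization above yields that $P$ is integral.

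The main obstacle is not the duality step, which is elementary, but the nontrivial direction of the characterization itself, namely that integrality of \emph{all} linear optima forces integrality of the polyhedron. Establishing this requires showing that if some minimal face of $P$ contains no integral point, then one can exhibit an integral objective whose optimum over $P$ is non-integer, a construction that rests on the rationality of $P$ together with a Hermite-normal-form / number-theoretic argument. Since this is precisely the content of the cited theorem, I would treat it as established rather than reprove it, and restrict the present argument to the concise deduction above; the converse implication (integral $P \Rightarrow$ integer optima) holds because the maximum is attained at an integral vertex, and is not needed here.
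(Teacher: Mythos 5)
Your argument is correct and is the standard proof of the Edmonds--Giles theorem: reduce integrality of the polytope to integrality of $\max\{c^\top x \mid Ax \leq b\}$ for all integral $c$ via the classical characterization of integral rational polyhedra, then obtain an integral optimal dual solution $y^\star$ from total dual integrality and conclude $\max\{c^\top x \mid Ax \leq b\} = b^\top y^\star \in \Z$ by strong duality. The paper itself gives no proof of this statement --- it is imported as a known Fact with a citation --- so there is nothing to compare against beyond noting that your deduction, including the explicit reliance on rationality of $A$ and $b$ (which the paper's TDI definition guarantees) and your decision to cite rather than reprove the characterization of integral polyhedra, is exactly how the cited sources establish it.
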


\begin{thm} 
\label{thm:path-tdi}
The system \eqref{eq:path-box} - \eqref{eq:path-square} is totally dual integral.
\end{thm}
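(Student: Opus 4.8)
The plan is to verify total dual integrality straight from the definition: for every integral objective $c \in \Z^m$ I will exhibit an integral optimal solution of the dual linear program. The first observation is that the only inequality of \eqref{eq:path-box}--\eqref{eq:path-square} with nonzero right-hand side is the single box constraint \eqref{eq:path-box}, so with multipliers $\alpha\ge 0$ for \eqref{eq:path-box} and $y=(\alpha,\beta,\gamma,\tau,\sigma)$ collecting the multipliers of the box, cut \eqref{eq:path-cut-right}--\eqref{eq:path-cut-left}, triangle \eqref{eq:path-triangle} and square \eqref{eq:path-square} inequalities, the dual reads
\[
\min\; \alpha \quad \text{s.t.}\quad A^\top y = c,\ y \ge 0 .
\]
By Lemma~\ref{lem:path-polytope} the feasible region satisfies $\Theta^{\mathrm{Path}}_P \subseteq [0,1]^m$ and is clearly nonempty, hence every $c$ gives a feasible and bounded primal. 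It therefore suffices to produce, for each integral $c$, an integral feasible dual vector $y$ whose value $\alpha$ matches the primal optimum.

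Next I would isolate the combinatorial structure. I would identify the integral points of $\Theta^{\mathrm{Path}}_P$ as the interval partitions of $\{0,\dots,n\}$: a cut-edge set $C \subseteq E$ gives $x_{jk}=1$ exactly when the segment from $j$ to $k$ contains a cut. Rewriting the square inequality \eqref{eq:path-square} as $x_{jk}-x_{j,k-1}\le x_{j+1,k}-x_{j+1,k-1}$ and the cut inequalities as monotonicities along the index lattice $\{(j,k)\}$ exhibits a staircase (Monge/supermodular) structure; this is precisely the structure that lets the sequential partitioning problem be solved as a shortest path in a directed acyclic graph \citep{Kernighan1971}. I would use this to describe an optimal primal vertex $x^\star$ as an explicit interval partition, and to read off via complementary slackness which inequalities may carry positive dual weight.

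With the support so restricted, I would construct the multipliers by a single sweep along the path. Processing the pairs $(j,k)$ from longest to shortest, the equations $A^\top y=c$ together with the complementary-slackness conditions that zero out the multipliers of inequalities slack at $x^\star$ pin down the admissible multipliers; the $n-1$ surplus degrees of freedom (there are $n-1$ more multipliers than pair-equations) are resolved by the same slackness conditions. I would then verify that this sweep keeps every multiplier integral and nonnegative, so that $y$ is integral, after which optimality follows from weak duality once $b^\top y = \alpha = c^\top x^\star$ is confirmed against the interval partition. Equivalently, the whole argument can be phrased through the Hilbert-basis characterization of TDI \citep{Schrijver1986}: for each face $F$ of $\Theta^{\mathrm{Path}}_P$ the rows of \eqref{eq:path-box}--\eqref{eq:path-square} active on $F$ must form a Hilbert basis of the cone they generate, and the interval-partition description of the faces is exactly what is needed to check this.

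The main obstacle I anticipate is controlling the square multipliers $\sigma_{jk}$. Unlike the cut and triangle inequalities, which involve only adjacent pairs, the square inequalities couple non-adjacent pairs, so solving $A^\top y = c$ mixes several previously fixed multipliers at each step, and one must show that no fractions are created and that $\sigma_{jk}\ge 0$ is preserved throughout the sweep. Proving this ``no-fraction, no-sign-violation'' property---equivalently, that the active-row cones at every face are generated by a Hilbert basis, i.e.\ that the relevant constraint submatrices are unimodular enough---is the technical heart of the theorem, and is exactly where the staircase geometry special to a path, rather than a general tree, is indispensable.
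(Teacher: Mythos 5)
There is a genuine gap. Your outline correctly sets up the dual (the objective reduces to the single multiplier of \eqref{eq:path-box}) and correctly locates the difficulty, but it stops exactly at the point where the proof has to be done: you write that one ``must show that no fractions are created and that $\sigma_{jk}\ge 0$ is preserved throughout the sweep'' and call this the technical heart of the theorem, yet you give no argument for it. A longest-to-shortest sweep with complementary slackness is a reasonable Ansatz, but nothing in your text establishes that the surviving linear system for the multipliers is triangular with unit pivots, nor that nonnegativity can always be arranged; without that, the claim of integrality of the dual optimum is unproved. The Hilbert-basis reformulation you mention does not help here either, since you do not verify the Hilbert-basis property for any face.

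The paper closes this gap by a different and fully explicit mechanism, which your proposal does not contain. It first homogenizes the system: introducing artificial indices $-\infty,\infty$ and auxiliary variables $x_{ii}$, $x_{-\infty,i}$, $x_{i,\infty}$, $x_{-\infty,\infty}$ (fixed by equations), all of \eqref{eq:path-box}--\eqref{eq:path-square} becomes one uniform family of square inequalities. In the dual, the equations corresponding to the pairs $(i,\ell)$ then determine every inequality-multiplier $y_{i-1,\ell+1}$ in closed form as a telescoping sum, $y_{i-1,\ell+1}=\sum_{i\le j\le k\le \ell} c_{j,k}-\sum_{k=i}^{\ell} z_{k,k}$, so the $y$ variables can be eliminated entirely. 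The reduced dual, in the $z$ variables alone, has an inequality constraint matrix with the consecutive-ones property and hence is totally unimodular, which immediately yields an integral optimal dual solution for every integral $c$. That elimination-plus-TU step is precisely the ``no-fraction, no-sign-violation'' argument your sweep would need, and it is the part your proposal leaves open.
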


\begin{cor} 
\label{cor:complete-description}
It holds that $\Xi_P = \Theta^{\mathrm{Path}}_P$.
\end{cor}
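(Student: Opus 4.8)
The plan is to obtain Corollary~\ref{cor:complete-description} as an essentially formal consequence of Theorem~\ref{thm:path-tdi} together with the containment results already established, so that the substantive work sits one level up in the theorem. First I would observe that the right-hand side of the system \eqref{eq:path-box}--\eqref{eq:path-square} is integral: it equals $1$ for \eqref{eq:path-box} and $0$ for every remaining inequality. Hence Theorem~\ref{thm:path-tdi} and Fact~\ref{fact:tdi-integral} together imply that the polyhedron they define, namely $\Theta^{\mathrm{Path}}_P$, is integral. By Lemma~\ref{lem:path-polytope} we have $\Theta^{\mathrm{Path}}_P \subseteq \Theta^1_P \subseteq [0,1]^m$, so $\Theta^{\mathrm{Path}}_P$ is a bounded polytope; an integral bounded polytope is the convex hull of its integer points, so $\Theta^{\mathrm{Path}}_P = \conv\big(\Theta^{\mathrm{Path}}_P \cap \Z^m\big)$.

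It then remains to identify these integer points with $X_P$, i.e.\ to show $\Theta^{\mathrm{Path}}_P \cap \{0,1\}^m = X_P$. For ``$\subseteq$'' I would chain the relaxations: Lemma~\ref{lem:path-polytope} gives $\Theta^{\mathrm{Path}}_P \subseteq \Theta^1_P$ and Lemma~\ref{lem:tree-polytope} gives $\Theta^1_P \subseteq \Theta^0_P$, where $\Theta^0_P$ is by definition the set of $x \in [0,1]^m$ satisfying the path inequalities \eqref{eq:tree-lmp-path} and the cut inequalities \eqref{eq:tree-lmp-cut}. Since $X_P$ is precisely $\Theta^0_P \cap \{0,1\}^m$, any integral point of $\Theta^{\mathrm{Path}}_P$ lies in $X_P$. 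The reverse inclusion ``$\supseteq$'' is immediate from $X_P \subseteq \Xi_P \subseteq \Theta^{\mathrm{Path}}_P$ (again Lemma~\ref{lem:path-polytope}) together with $X_P \subseteq \{0,1\}^m$.

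Combining these facts gives $\Theta^{\mathrm{Path}}_P = \conv\big(\Theta^{\mathrm{Path}}_P \cap \Z^m\big) = \conv(X_P) = \Xi_P$, and since the opposite inclusion $\Xi_P \subseteq \Theta^{\mathrm{Path}}_P$ is part of Lemma~\ref{lem:path-polytope}, the claimed equality $\Xi_P = \Theta^{\mathrm{Path}}_P$ follows.

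I expect the main obstacle to lie not in this corollary but in Theorem~\ref{thm:path-tdi}, on which everything above rests; the corollary itself is bookkeeping once total dual integrality is in hand. To prove the theorem one must exhibit, for every integral $c$ with bounded feasible primal, an integral optimal dual solution. The route I would try is to note that, under the convention $x_{ii}=0$ and with suitable phantom boundary values, the cut, triangle and square inequalities are all instances of a single Monge-type inequality $x_{j,k}+x_{j+1,k-1} \le x_{j,k-1}+x_{j+1,k}$ over a triangular grid, so that the constraint matrix acquires a network/Monge structure one expects to be totally unimodular, forcing the dual feasible region $\{y \ge 0 : A^\top y = c\}$ to be integral. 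Verifying that this phantom-boundary encoding reproduces the system \eqref{eq:path-box}--\eqref{eq:path-square} exactly, without introducing spurious constraints, is the delicate point, and it is precisely the link to the totally unimodular set-partitioning formulation flagged in the introduction.
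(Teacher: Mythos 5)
Your proof is correct and follows the same route as the paper: the paper's own proof of the corollary is a one-line appeal to Lemma~\ref{lem:path-polytope}, Fact~\ref{fact:tdi-integral} and Theorem~\ref{thm:path-tdi}, and you have simply spelled out the bookkeeping (integrality of the right-hand side, boundedness, and the identification of the integer points of $\Theta^{\mathrm{Path}}_P$ with $X_P$ via the chain $\Theta^{\mathrm{Path}}_P \subseteq \Theta^1_P \subseteq \Theta^0_P$) that the paper leaves implicit. Your closing remarks on how one might prove the theorem itself are not needed for the corollary, but they are in the same spirit as the paper's actual argument, which reduces the dual to a system with a consecutive-ones, totally unimodular constraint matrix.
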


\begin{proof}
This is immediate from Lemma \ref{lem:path-polytope}, Fact  \ref{fact:tdi-integral} and Theorem \ref{thm:path-tdi}.
\end{proof}

\begin{rem}
The constraint matrix corresponding to the system \eqref{eq:path-box} - \eqref{eq:path-square} is in general \emph{not} totally unimodular. A minimal example is the path of length $4$.
\end{rem}

The set partitioning representation of the path partition problem w.r.t.\ $P$ admits a quadratic number of variables and a linear number of constraints (opposed to a quadratic number of constraints in the description of $\Xi_P$). This representation corresponds to the dual program of the last problem in the proof of Theorem \ref{thm:path-tdi}, cf.\ appendix \ref{sec:proofs}. Here, the integrality constraint need not be enforced, since the constraint matrix is totally unimodular.


\section{Conclusion}

We have characterized all trivial facets of the lifted multicut polytope for trees. Additionally, we provide a tighter relaxation compared to the standard linear relaxation by including additional classes of facets. The described facets provide a complete totally dual integral (TDI) description of the associated lifted multicut polytope for paths. This result relates the geometry of this problem to the combinatorial properties of alternative formulations such as the sequential set partitioning problem.

\bibliographystyle{plainnat}
\bibliography{0000}


\appendix

\section{Proofs for Section \ref{sec:polyhedral}}
\label{sec:proofs}

\paragraph{Proof of Lemma \ref{lem:tree-polytope}.}

We show first that $\Xi_T \subseteq \Theta^1_T$. For this purpose, let $x \in \Xi_T$. If $x$ violates \eqref{eq:tree-path}, then $x_{uv} = 1$ and $x_{u,\vec{u}(v)} = x_{\vec{u}(v),v} = 0$. This contradicts the fact that $x$ satisfies all cut inequalities w.r.t.\ $\vec{u}(v),v$ and the path inequality w.r.t.\ $u,v$. If $x$ violates \eqref{eq:tree-cut}, then $x_{\vec{u}(v),v} = 1$ and $x_{uv} = 0$. This contradicts the fact that $x$ satisfies all cut inequalities w.r.t.\ $u,v$ and the path inequality w.r.t.\ $\vec{u}(v),v$. Hence, $x$ satisfies \eqref{eq:tree-path}-\eqref{eq:tree-cut} and therefore $x \in \Theta^1_T$.

Now, we show that $\Theta^1_T \subseteq \Theta^0_T$. Let $x \in \Theta^1_T$. We need to show that $x$ satisfies all path and cut inequalities. Let $u,v \in V$ with $\dist(u,v) \geq 2$. We proceed by induction on $\dist(u,v)$. If $\dist(u,v) = 2$, then the path inequality is directly given by \eqref{eq:tree-path}, while the two cut inequalities are given by \eqref{eq:tree-cut} for the two possible orderings of $u$ and $v$. If $\dist(u,v) > 2$, then the path inequality is obtained from \eqref{eq:tree-path} and the induction hypothesis for the pair $\vec{u}(v),v$, since $\dist(\vec{u}(v),v) = \dist(u,v) - 1$. Similarly, for any edge $e$ on the path from $u$ to $v$, we obtain the cut inequality w.r.t.\ $e$ by using the induction hypothesis and \eqref{eq:tree-cut} such that (w.l.o.g.) $e$ is on the path from $\vec{u}(v)$ to $v$. Hence, $x \in \Theta^0_T$. 

\paragraph{Proof of Lemma \ref{lem:tree-path-facets}.}

First, suppose $\dist(u,v) = 2$. Then $P_{uv}$ is a path of length $2$ and thus chordless in the complete graph on $V$. Hence, facet-definingness follows directly from \citep[Thm 4]{hornakova-2017}. Now, suppose $\dist(u,v) > 2$ and let $x \in \Xi_T$ be such that \eqref{eq:tree-path} is satisfied with equality. We show that this implies
\begin{align}
x_{uv} + x_{\vec{u}(v),\vec{v}(u)} = x_{u,\vec{v}(u)} + x_{\vec{u}(v),v}. \label{eq:square-equality}
\end{align}
Then the face of $\Xi_T$ induced by \eqref{eq:tree-path} has dimension at most $m - 2$ and hence cannot be a facet. In order to check that \eqref{eq:square-equality} holds, we distinguish the following three cases. If $x_{uv} = x_{u,\vec{u}(v)} = x_{\vec{u}(v),v}$, then all terms in \eqref{eq:square-equality} vanish. If $x_{uv} = x_{u,\vec{u}(v)} = 1$ and $x_{\vec{u}(v),v} = 0$, then $x_{\vec{u}(v),\vec{v}(u)} = 0$ and $x_{u,\vec{v}(u)} = 1$, so \eqref{eq:square-equality} holds. Finally, if $x_{uv} = x_{\vec{u}(v),v} = 1$ and $x_{u,\vec{u}(v)} = 0$, then \eqref{eq:square-equality} holds because $x_{\vec{u}(v),\vec{v}(u)} = x_{u,\vec{v}(u)}$ by contraction of the edge $u,\vec{u}(v)$.

\paragraph{Proof of Lemma \ref{lem:tree-cut-facets}.}

First, suppose $v$ is not a leaf of $T$ and let $x \in \Xi_T$ be such that \eqref{eq:tree-cut} is satisfied with equality. Since $v$ is not a leaf, there exists a neighbor $w \in V$ of $v$ such that $P_{\vec{u}(v),v}$ is a subpath of $P_{\vec{u}(v),w}$. We show that $x$ additionally satisfies the equality
\begin{align}
x_{uw} = x_{\vec{u}(v),w} \label{eq:tree-cut-equality}
\end{align}
and thus the face of $\Xi_T$ induced by \eqref{eq:tree-cut} cannot be a facet. There are two possible cases: Either $x_{uv} = x_{\vec{u}(v),v} = 1$, then $x_{uw} = x_{\vec{u}(v),w} = 1$ as well, or $x_{uv} = x_{\vec{u}(v),v} = 0$, then $x_{uw} = x_{vw} = x_{\vec{u}(v),w}$ by contraction of the path $P_{uv}$, so \eqref{eq:tree-cut-equality} holds.

Now, suppose $v$ is a leaf of $T$ and let $\Sigma$ be the face of $\Xi_T$ induced by \eqref{eq:tree-cut}. We need to prove that $\Sigma$ has dimension $m-1$. This can be done explicitly by showing that we can construct $m-1$ distinct indicator vectors $\chi_{w,w'}$ for $w, w' \in V$ as linear combinations of elements from the set $S = \{x \in \Xi_T \mid x_{uv} = x_{\vec{u}(v),v}\}$.
In fact this construction is analogous to the one used in the proof of Theorem 7 in \citep{hornakova-2017} where the authors derive the dimension of the lifted multicut polytope $\dim(\Xi_{GG'}) = \lvert E' \rvert$. The difference here is that the vector $\1 - \chi_{uv} \notin S$, so we have to distinguish between $x \in S$ with $x_{uv} = x_{\vec{u}(v),v} = 0$ and $x \in S$ with $x_{uv} = x_{\vec{u}(v),v} = 1$ in order to show $\chi_{u,\vec{u}(v)} \in \lin(S)$ and then $\chi_e \in \lin(S)$ for all other $e \neq  \{u,v\}, \{\vec{u}(v),v\}$.

\paragraph{Proof of Lemma \ref{lem:tree-box-facets}.}

We apply the more general characterization given in \citep{hornakova-2017} to our special case. The nodes $u,v \in V$ are a pair of $w$-$w'$-cut-vertices for some vertices $w,w' \in V$ (with at least one being different from $u$ and $v$) if and only if $u$ or $v$ is not a leaf of $V$. Thus, the claim follows from \citep[Thm 8]{hornakova-2017}. The second assertion follows from \citep[Thm 9]{hornakova-2017} and the fact that we lift to the complete graph on $V$.

\paragraph{Proof of Lemma \ref{lem:tree-square-valid}.}

Let $x \in \Xi_T$ and suppose that either $x_{u,\vec{v}(u)} = 0$ or $x_{\vec{u}(v),v} = 0$ for some $u,v \in V$ with $\dist(u,v) \geq 3$. Then, since $x$ satisfies all cut inequalities w.r.t.\ $u,\vec{v}(u)$, respectively $\vec{u}(v),v$, and the path inequality w.r.t.\ $\vec{u}(v), \vec{v}(u)$, it must hold that $x_{\vec{u}(v),\vec{v}(u)} = 0$. Moreover, if even $x_{u,\vec{v}(u)} = 0 = x_{\vec{u}(v),v}$, then, by the same reasoning, we have $x_{uv} = 0$ as well. Hence, $x$ satisfies \eqref{eq:tree-square}.

\paragraph{Proof of Lemma \ref{lem:tree-square-facets}.}

Let $\Sigma$ be the face of $\Xi_T$ induced by \eqref{eq:tree-square}. We need to prove that $\Sigma$ has dimension $m-1$, which can be done explicitly by showing that we can construct $m-1$ distinct indicator vectors $\chi_{w,w'}$ for $w, w' \in V$ as linear combinations of elements from the set $S = \{x \in \Xi_T \mid x_{uv} + x_{\vec{u}(v),\vec{v}(u)} = x_{u,\vec{v}(u)} + x_{\vec{u}(v),v}\}$.
Again, this construction is very technical and analogous to the proof of Theorem 7 in \citep{hornakova-2017} about the dimension of the lifted multicut polytope.

\paragraph{Proof of Lemma \ref{lem:path-polytope}.}

First, we show that $\Xi_P \subseteq \Theta^{\mathrm{Path}}_P$. Let $x \in \Xi_P$, then $x$ satisfies \eqref{eq:path-box} and \eqref{eq:path-triangle} by definition. Suppose $x$ violates \eqref{eq:path-cut-right}, then $x_{in} = 1$ and $x_{i-1,n} = 0$. This contradicts the fact that $x$ satisfies all cut inequalities w.r.t.\ $i-1,n$ and the path inequality w.r.t.\ $i,n$. So, $x$ must satisfy \eqref{eq:path-cut-right} and, by symmetry, also \eqref{eq:path-cut-left}. It follows from Lemma \ref{lem:tree-square-valid} that $x$ satisfies \eqref{eq:path-square} as well and thus $x \in \Theta^{\mathrm{Path}}_P$.

Next, we prove that $\Theta^{\mathrm{Path}}_P \subseteq \Theta^1_P$. To this end, let $x \in \Theta^{\mathrm{Path}}_P$.
We show that $x$ satisfies all inequalities \eqref{eq:tree-cut}. Let $u,v \in V$ with $u < v - 1$. We need to prove that both $x_{u+1,v} \leq x_{uv}$ and $x_{u,v-1} \leq x_{uv}$ hold. For reasons of symmetry, it suffices to show only $x_{u+1,v} \leq x_{uv}$. We proceed by induction on the distance of $u$ from $n$. If $v = n$, then $x_{u+1,n} \leq x_{un}$ is given by \eqref{eq:path-cut-right}. Otherwise, we use \eqref{eq:path-square} for $j=u$ and $k=v+1$ and the induction hypothesis on $v+1$:
\begin{align*}
x_{uv} + x_{u+v,v+1} & \geq x_{u+1,v} + x_{u,v+1} \\
& \geq x_{u+1,v} + x_{u+1,v+1} \\
\implies x_{uv} & \geq x_{u+1,v}.
\end{align*}  
It remains to show that $x$ satisfies all inequalities \eqref{eq:tree-path}. Let $u,v \in V$ with $u < v - 1$. We proceed by induction on $\dist(u,v) = u-v$. If $\dist(u,v) = 2$, then \eqref{eq:tree-path} is given by \eqref{eq:path-triangle}. If $\dist(u,v) > 2$, then we use \eqref{eq:path-square} for $j=u$ and $k=v$ as well as the induction hypothesis on $u,v-1$, which have distance $\dist(u,v) - 1$:
\begin{align*}
x_{uv} + x_{u+1,v-1} & \leq x_{u+1,v} + x_{u,v-1} \\
&  \leq x_{u+1,v} + x_{u,u+1} + x_{u+1,v-1} \\
\implies  x_{uv} & \leq x_{u,u+1} + x_{u+1,v}.
\end{align*}
Hence, $x \in \Theta^1_P$, which concludes the proof.

\paragraph{Proof of Theorem \ref{thm:path-tdi}}

We rewrite system \eqref{eq:path-box} - \eqref{eq:path-square} more compactly in the following way. Introduce two artificial nodes $-\infty$ and $\infty$ where we associate $-\infty$ to any index less than $0$ and $\infty$ to any index greater than $n$. Moreover, we introduce variables $x_{ii}$ for all $0 \leq i \leq n$ as well as $x_{-\infty,i}$ and $x_{i, \infty}$ for all $1 \leq i \leq n-1$ and finally $x_{-\infty, \infty}$. Now, the system \eqref{eq:path-box} - \eqref{eq:path-square} is equivalent to the system
\begin{align}
& x_{j,k} + x_{j+1,k-1} \leq x_{j+1,k} + x_{j,k-1} \nonumber \\
& \qquad \qquad \forall j,k \in \{-\infty, 0, \dotsc, n, \infty \}, \: j \leq k-2 \label{eq:path-4cycle}
\end{align}
given the additional equality constraints
\begin{align}
x_{ii} & = 0 \quad \forall 0 \leq i \leq n, \label{eq:path-vertices} \\
x_{-\infty, i} & = 1 \quad \forall 1 \leq i \leq n-1, \\
x_{i, \infty} & = 1 \quad \forall 1 \leq i \leq n-1, \\
x_{-\infty, \infty} & = 1. \label{eq:path-inf-cut}
\end{align}
Let the system defined by \eqref{eq:path-4cycle} - \eqref{eq:path-inf-cut} be represented in matrix form as $Ax \leq a, \; Bx = b$. Note that $\Theta^{\mathrm{Path}}_P$ is non-empty and bounded. Thus, to establish TDI, we need to show that for any $c \in \Z^{m + 3n}$ the dual program
\begin{align}
\min \{ a^\top y + b^\top z \mid A^\top y + B^\top z = c, y \geq 0\} \label{eq:dual-prog}
\end{align}
has an integral optimal solution. Here, the $y$ variables, indexed by $j,k$, correspond to the inequalities \eqref{eq:path-4cycle} and the $z$ variables, indexed by pairs of $i, -\infty$ and $\infty$, correspond to the equations \eqref{eq:path-vertices}  - \eqref{eq:path-inf-cut}. Then, the equation system $A^\top y + B^\top z = c$ translates to
\begin{align}
y_{i-1,i+1} + z_{i,i} & = c_{i,i} \label{eq:dual-d1} \\
y_{i-1,i+2} - y_{i-1,i+1} - y_{i,i+2} & = c_{i,i+1} \label{eq:dual-d2} \\
y_{i-1,\ell+1} - y_{i-1,\ell} - y_{i,\ell+1} + y_{i,\ell} & = c_{i,\ell} \label{eq:dual-d3} \\
- y_{-\infty,i+1} + y_{-\infty,i} + z_{-\infty,i} & = c_{-\infty,i} \label{eq:dual-left} \\
- y_{i-1,\infty} + y_{i,\infty} + z_{i,\infty} & = c_{i,\infty} \label{eq:dual-right} \\
\phantom{\ +} y_{-\infty,\infty} + z_{-\infty,\infty} & = c_{-\infty,\infty}, \label{eq:dual-inf}
\end{align}
where \eqref{eq:dual-d1} - \eqref{eq:dual-d3} hold for all $0 \leq i < i+1 < \ell \leq n$ and \eqref{eq:dual-left}, \eqref{eq:dual-right} hold for all $1 \leq i \leq n-1$. Observe that \eqref{eq:dual-d1} includes only $y$ variables with indices of distance $2$, \eqref{eq:dual-d2} couples $y$ variables of distance $3$ with those of distance $2$ and \eqref{eq:dual-d3} couples the remaining $y$ variables of any distance $d > 3$ with those of distance $d-1$ and $d-2$. Hence, any choice of values for the free variables $z_{ii}$ completely determines all $y$ variables. This means we can eliminate $y$ and reformulate the dual program entirely in terms of the $z$ variables, as follows. It holds that
\begin{align*}
0 \leq y_{i-1,i+1} & = c_{i,i} - z_{i,i} \qquad \forall 0 \leq i \leq n, \\
0 \leq y_{i-1,i+2} & = c_{i,i+1} + c_{i,i} + c_{i+1,i+1} - z_{i,i} - z_{i+1,i+1} \\
& \qquad \qquad \forall 0 \leq i < i+1 \leq n
\end{align*}
and thus, by \eqref{eq:dual-d3},
\begin{align*}
0 \leq y_{i-1,\ell+1} = \sum_{i \leq j \leq k \leq \ell} c_{j,k} - \sum_{k=i}^{\ell} z_{k,k}
\end{align*}
for all $0 \leq i \leq \ell \leq n$. Substituting the $y$ variables in \eqref{eq:dual-left} - \eqref{eq:dual-inf} yields the following equivalent formulation of \eqref{eq:dual-prog}:
\begin{align}
\min \quad & z_{-\infty,\infty} + \sum_{i=0}^n z_{-\infty,i} + z_{i,\infty} \label{eq:dual-red} \\
\text{s.t.} \quad & \sum_{k=i}^\ell z_{k,k} \leq \sum_{i \leq j \leq k \leq \ell} c_{j,k} \qquad \forall 0 \leq i \leq \ell \leq n \nonumber \\
& z_{-\infty,i} + z_{i,i} = c_{-\infty,i} + \sum_{0 \leq j \leq i} c_{j,i} \quad \forall 1 \leq i \leq n-1 \nonumber \\
& z_{i,\infty} + z_{i,i} = c_{i,\infty} + \sum_{i \leq k \leq n} c_{i,k} \qquad \forall 1 \leq i \leq n-1 \nonumber \\
& z_{-\infty,\infty} - \sum_{k=1}^n z_{k,k} = c_{-\infty,\infty} - \sum_{0 \leq j \leq k \leq n} c_{j,k}. \nonumber 
\end{align}
The variables $z_{-\infty,i}, \; z_{i,\infty}$ and $z_{-\infty,\infty}$ occur only in a single equation each. Furthermore, the matrix corresponding to the inequality constraints satisfies the ``consecutive-ones'' property. Therefore, the constraint matrix of the whole system is totally unimodular, which concludes the proof.

\paragraph{Path Partition as Sequential Set Partitioning}

For each $0 \leq i \leq \ell \leq n$, let 
\begin{align*}
d_{i,\ell} = \sum_{0 \leq i \leq j \leq k \leq n} c_{j,k},
\end{align*}
then taking the dual of problem \eqref{eq:dual-red} and simplifying yields the problem
\begin{align}
\min \quad & d^\top \lambda \label{eq:ssp} \\
\text{s.t.} \quad & \sum_{0 \leq i \leq k \leq \ell \leq n} \lambda_{i,\ell} = 1, \quad \forall 0 \leq k \leq n \nonumber \\
& \lambda \geq 0. \nonumber 
\end{align}
Each variable $\lambda_{i,\ell}$ corresponds to the component containing nodes $i$ to $\ell$. Problem \eqref{eq:ssp} is precisely the sequential set partitioning formulation of the \pathproblem{} as used by \citet{Joseph1997}.

\section{Lifted Multicuts of General Graphs}
\label{sec:lifted-multicuts}

For any connected graph $G = (V,E)$,
any supergraph $G' = (V, E')$ with $E' = E \cupdot F$ 
and any $c: E' \to \mathbb{R}$
the instance of the \emph{minimum cost lifted multicut problem} w.r.t.~$G$, $G'$ and $c$ is the binary linear program
\begin{align}
\min_{x \in X_{GG'}} \sum_{e \in E'} c_e \, x_e \label{eq:lmp-problem}
\end{align}
with
\begin{align}
X_{GG'} & = \Big \{ x \in \{0,1\}^{E'} \; \big | \nonumber \\
& \forall C \in \text{cycles}(G) \, \forall \hat{e} \in C : x_{\hat{e}} \leq \sum_{e \in C \setminus \{\hat{e}\}} x_e, \label{eq:lmp-cycle} \\
& \forall uv \in F \, \forall P \in uv\text{-paths}(G) : x_{uv} \leq \sum_{e \in P} x_e, \label{eq:lmp-path} \\
& \forall uv \in F \, \forall C \in uv\text{-cuts}(G) : 1 - x_{uv} \leq \sum_{e \in C} 1 - x_e \label{eq:lmp-cut} \Big \}. 
\end{align}
The inequalities \eqref{eq:lmp-cycle}, \eqref{eq:lmp-path} and \eqref{eq:lmp-cut} are referred to as \emph{cycle}, \emph{path} and \emph{cut} inequalities, respectively.
The convex hull 
\begin{align}
\Xi_{GG'} = \conv X_{GG'} \label{eq:lifted-multicut-polytope}
\end{align}
of $X_{GG'}$ in $\mathbb{R}^{E'}$ is called the \emph{lifted multicut polytope} w.r.t.~$G$ and $G'$.
It was shown by \citet{hornakova-2017} that $\Xi_{GG'}$ has dimension $\lvert E' \rvert$.

For any $x \in X_{GG'}$, the set $M := \{e \in E \,|\, x_e = 1\}$ of those edges of the graph $G$ that are labeled 1 is a so-called \emph{multicut} of $G$. 
Hence, there exists a decomposition of $G$ such that $M$ is precisely the set of those edges that span across distinct components.
In addition, the set $M' := \{e \in E' \,|\, x_e = 1\}$ of those edges $\{u,v\} = e \in E'$ of the graph $G'$ that are labeled 1 is a multicut of $G'$ lifted from the multicut $M$ of $G$.
A multicut $M'$ of $G'$ lifted from $G$ makes explicit for all pairs $\{u,v\} \in E'$ of nodes (not only those neighboring in $G$) whether $u$ and $v$ are in the same component in the decomposition of $G$.

\end{document}